\newtheorem{corollary}{Corollary}
\newtheorem{lemma}{Lemma}
\newtheorem{proposition}{Proposition}
\newtheorem{theorem}{Theorem}
\theoremstyle{definition}
\newtheorem{definition}{Definition}
\newcommand{\A}{\mathcal{A}}
\newcommand{\bures}[2]{\mathcal{A}\left(#1,#2\right)}
\newcommand{\cov}{\theta_\mathrm{cover}}
\newcommand{\E}{\mathcal{E}}
\newcommand{\F}[2]{F\left(#1,#2\right)}
\newcommand{\Ketbra}[2]{\Ket{#1}\!\!\Bra{#2}}
\newcommand{\ketbra}[2]{\ket{#1}\!\!\bra{#2}}
\newcommand{\M}{\mathcal{M}}
\newcommand{\perr}{p_\mathrm{err}}
\newcommand{\qcdp}[2]{\mathrm{QCDP}((#1),\allowbreak(#2))}
\newcommand{\qcgdp}[3]{\mathrm{QCGDP}((#1),\allowbreak(#2),(#3))}
\newcommand{\Tr}{\mathrm{Tr}}
\DeclareMathOperator{\argmax}{argmax}
\begin{document}

\title{Lower bounds on the error probability of quantum channel discrimination by the Bures angle and the trace distance}

\author{Ryo Ito}
\email[]{ito.r.al@m.titech.ac.jp}
\author{Ryuhei Mori}
\email[]{mori@c.titech.ac.jp}
\affiliation{School of Computing, Tokyo Institute of Technology, Japan}

\date{\today}

\begin{abstract}
Quantum channel discrimination is a fundamental problem in quantum information science.
In this study, we consider general quantum channel discrimination problems, 
and derive the lower bounds of the error probability.
Our lower bounds are based on the triangle inequalities of the Bures angle 
and the trace distance.
As a consequence of the lower bound based on the Bures angle, we prove 
the optimality of Grover's search if the number of marked elements 
is fixed to some integer $\ell$.
This result generalizes Zalka's result for $\ell=1$.
We also present several numerical results in which our lower bounds based 
on the trace distance outperform recently obtained lower bounds.
\end{abstract}

\maketitle

\section{Introduction}\label{intro}
The quantum channel discrimination problem is a fundamental problem in 
quantum information science~\cite{harrow2010adaptive,duan2009perfect,kawachi2019quantum,pirandola2019fundamental,zhuang2020ultimate,PhysRevA.104.052406}.
In this study, we consider a general quantum channel discrimination problem,
and derive the lower bounds of the error probability.

From Helstrom's seminar work~\cite{helstrom1967detection} and Holevo's 
subsequent work~\cite{holevo1972analogue}, the error probability of the discrimination 
of two quantum states $\rho$ and $\sigma$ is characterized by the trace 
distance $\frac{1}{2}\|\rho-\sigma\|_1$, which can be computed efficiently.
As a consequence, the error probability of the discrimination of two quantum channels 
$\Psi$ and $\Phi$ by algorithms invoking the quantum channel once
can be characterized by the maximum of the trace distance, 
$\max_{\rho}\frac{1}{2}\|\Phi\otimes\mathrm{id}(\rho)-\Psi\otimes\mathrm{id}(\rho)\|_1 =: \frac12\|\Phi-\Psi\|_{\diamond}$,
which is called the diamond distance.
The diamond distance can be computed efficiently as well~\cite{watrous2013simpler}.
However, if we consider algorithms that use the quantum channel $n$ times, the analysis is much more complicated.
If we restrict the discrimination algorithms to be non-adaptive, i.e., the given channel is called $n$ times at once in parallel,
the error probability is characterized by the diamond distance 
$\frac{1}{2}\|\Phi^{\otimes n}-\Psi^{\otimes n}\|_\diamond$.
To compute the above diamond distance, we have to solve an 
optimization problem on exponentially high-dimensional linear space in $n$.
Hence, it is computationally hard to compute the error probability of 
non-adaptive algorithms with $n$ queries when $n$ is large.
Furthermore, for general discrimination algorithms, no simple representation 
of the error probability has been known even for the discrimination of two quantum channels.

The exact error probability for a fixed number of queries is known if 
two unitary channels are given with uniform probabilities~\cite{kawachi2019quantum}.
In this case, a non-adaptive algorithm achieves the minimum error probability. 
For general quantum channel discriminations, however, 
there exists some discrimination problem
in which some adaptive algorithm gives a strictly smaller error probability
than any non-adaptive algorithms~\cite{harrow2010adaptive,PhysRevA.104.052406}.

The well-known Grover's search problem can be regarded as
an instance of a quantum channel discrimination problem if there is 
exactly one marked element.
Grover's search algorithm solves this problem with $O(\sqrt{N})$ queries,
where $N$ is the number of elements.
The asymptotic optimality of Grover's algorithm was shown 
in~\cite{bennett1997strengths}.
Zalka showed that the error probability of Grover's algorithm 
for a fixed number of queries is optimal when there is exactly one marked element~\cite{zalka1999grover}.

Recently, lower bounds on the error probability of general quantum channel 
discrimination have been studied~\cite{pirandola2019fundamental,zhuang2020ultimate,PhysRevLett.126.200502}.
In this study, we derive several lower bounds on the error probability of adaptive algorithms for quantum channel discrimination problems.
On the lines of previous studies~\cite{bennett1997strengths,boyer1998tight,zalka1999grover}, 
the main technique we use to derive the lower bounds is the application of triangle inequalities 
for some distance measures to series of quantum states.
In this work, we use two distance measures, namely, the Bures angle and the trace distance.

In this paper, we first derive lower bounds on the error probability of two quantum channel discriminations.
Our technique using the Bures angle is a natural generalization of the known technique for unitary channel discrimination~\cite{kawachi2019quantum}. 
For the discrimination of two amplitude damping channels with damping rates 
$r_0$ and $r_1$, we obtain a simple analytic lower bound $\frac12(1-\sin (n\Delta(r_0,r_1)))$
if $n\Delta(r_0,r_1)\le\frac{\pi}2$, where $\Delta(r_0,r_1):=\arccos(\sqrt{r_0r_1}+\sqrt{(1-r_0)(1-r_1)})$.
This lower bound is better than the known lower bounds~\cite{pirandola2019fundamental, pereira2021bounds} 
for some choices of the damping rates $r_0$ and $r_1$ and the number $n$ of queries. 
For the lower bound based on the trace distance, we introduce 
``weights'' for the series of quantum states to tighten the triangle inequality.
We present several numerical results in which our lower bounds for two quantum channel 
discrimination based on the trace distance outperform recently obtained lower bounds.

We next consider a quantum channel group discrimination problem,
which is a generalization of the quantum channel discrimination problem,
and derive lower bounds of the error probability.
Grover's search problem is an instance of the quantum channel group discrimination problem if the number of marked elements is fixed to some integer $\ell$.
Our lower bound based on the Bures angle shows the optimality of Grover's algorithm in this case.
This result generalizes Zalka's result for $\ell=1$.
Our lower bound based on the trace distance yields better lower bounds for some problems considered in~\cite{zhuang2020ultimate}.

This paper is organized as follows.
Distance measures for quantum states, including the Bures angle and the 
trace distance, are introduced in Section~\ref{sec:dist}.
The quantum channel discrimination problem and the quantum channel group 
discrimination problem are introduced in Section~\ref{sec:prelim}.
The main results in this paper are presented in Section~\ref{sec:result}.
The applications of our lower bounds are described in Section~\ref{sec:application}.
The proofs of the main results are presented in Section~\ref{sec:proof}.
In Section~\ref{sec:sdp}, we show that the lower bounds derived in this 
paper can be computed efficiently using semidefinite programming (SDP)\@.
Finally, we add some conclude remarks in Section~\ref{sec:concl}.

\section{The trace distance and the Bures angle}\label{sec:dist}
In this section, we introduce the notion of distance measures for quantum 
states, and several examples of distance measures, including the trace distance and the Bures angle.
Let $\mathsf{L}(A)$ be a set of linear operators on a quantum system $A$.
Let $\mathsf{D}(A)\subseteq \mathsf{L}(A)$ be a set of density operators on a quantum system $A$.
\begin{definition}[Distance measure for quantum states]
  A function 
  \begin{equation*}
    D\colon \{(\rho_A\in\mathsf{D}(A),\sigma_A\in\mathsf{D}(A))\mid A\colon \text{a system}\}\to [0,\infty)
  \end{equation*}
  is called a distance function if it satisfies the following conditions.\footnote{In fact, we only need the triangle inequality and the monotonicity in this paper.}
  \begin{enumerate}
    \item (Identity of indiscernibles) For any $\rho_A,\sigma_A\in\mathsf{D}(A)$, $D(\rho_A,\sigma_A)=0$ if and only if $\rho_A=\sigma_A$.
    \item (Symmetry) For any $\rho_A,\sigma_A\in\mathsf{D}(A)$, $D(\rho_A,\sigma_A)=D(\sigma_A,\rho_A)$.
    \item (Triangle inequality) For any $\rho_A,\sigma_A,\tau_A\in\mathsf{D}(A)$, $D(\rho_A,\sigma_A)\le D(\rho_A,\tau_A)+D(\tau_A,\sigma_A)$.
    \item (Monotonicity) For any $\rho_A,\sigma_A\in\mathsf{D}(A)$ and any quantum channel $\Psi_{A\to B}$ from a quantum system $A$ to a quantum system $B$, $D(\Psi_{A\to B}(\rho_A),\Psi_{A\to B}(\sigma_A))\le D(\rho_A,\sigma_A)$.
  \end{enumerate}
\end{definition}
The trace distance is a well-known distance measure for quantum states.
\begin{definition}[Trace distance]
For a linear operator $L_A\in\mathsf{L}(A)$, the trace norm is defined as $\|L_A\|_1:= \Tr\sqrt{L_AL_A^\dagger}$.
For density operators $\rho_A,\sigma_A\in\mathsf{D}(A)$, 
the trace distance is defined as $D(\rho_A,\sigma_A)=\frac12\|\rho_A-\sigma_A\|_1$.
\end{definition}
The trace distance has a useful property called the \textit{joint convexity}, which is 
$D\left(\sum_i p_i \rho_A^{(i)},\sum_i p_i \sigma_A^{(i)}\right)\le \sum_i p_i D(\rho_A^{(i)},\sigma_A^{(i)})$
for any probability distribution $(p_i)_i$ and density operators $(\rho_A^{(i)})_i$ and $(\sigma_A^{(i)})_i$.

The fidelity of quantum states is defined as follows.
\begin{definition}[Fidelity]
  For density operators $\rho_A,\sigma_A\in\mathsf{D}(A)$, the fidelity is defined as
  \begin{equation*}
  F(\rho_A,\sigma_A) := \|\sqrt{\rho_A}\sqrt{\sigma_A}\|_1.
  \end{equation*}
\end{definition}
The Bures angle is defined as follows.
\begin{definition}[Bures angle]
  For density operators $\rho_A,\sigma_A\in\mathsf{D}(A)$, the Bures angle is defined as
  \begin{equation*}
    \mathcal{A}(\rho_A, \sigma_A) := \arccos F(\rho_A,\sigma_A).
  \end{equation*}
  The Bures angle satisfies the conditions on the distance measure for quantum states~\cite{nielsen_chuang_2010}.
\end{definition}

There are several distance measures induced by the fidelity.
Indeed, the Bures distance $\mathcal{B}(\rho_A,\sigma_A)$ and 
the sine distance $\mathcal{C}(\rho_A,\sigma_A)$, defined as
\begin{align*}
  \mathcal{B}(\rho_A,\sigma_A) &:= \sqrt{2-2F(\rho_A,\sigma_A)}\\
  \mathcal{C}(\rho_A,\sigma_A) &:= \sqrt{1-F(\rho_A,\sigma_A)^2},
\end{align*}
satisfy the conditions on the distance measure for quantum states.
These distance measures can be naturally represented in terms of the Bures angle.
Indeed, for $\theta = \mathcal{A}(\rho_A,\sigma_A)$, it holds $\mathcal{B}(\rho_A,\sigma_A) = 2\sin(\theta/2)$ and 
$\mathcal{C}(\rho_A,\sigma_A)=\sin(\theta)$.
The triangle inequalities for the Bures distance and the sine distance are 
obtained from the triangle inequality for the Bures angle (see Appendix~\ref{apx:triangle}).
This fact means that the triangle inequality for the Bures angle gives 
better bounds than the triangle inequalities for the Bures distance and the sine distance.
Hence, in this study, we use the Bures angle, and do not use the Bures distance 
or the sine distance to derive inequalities.

Although the fidelity-based distances $\mathcal{A}$, $\mathcal{B}$, and 
$\mathcal{C}$ are not jointly convex, the fidelity has a useful property called the \text{joint concavity},
which is $F\left(\sum_i p_i \rho_A^{(i)},\sum_i p_i \sigma_A^{(i)}\right)\ge \sum_i p_i F(\rho_A^{(i)},\sigma_A^{(i)})$
for any probability distribution $(p_i)_i$ and density operators $(\rho_A^{(i)})_i$ and $(\sigma_A^{(i)})_i$.

\section{Quantum channel discrimination problems}\label{sec:prelim}
Let $A$ and $B$ be finite-dimensional quantum systems.
The quantum channel discrimination problem is defined as follows.
\begin{definition}[Quantum channel discrimination problem]
  Let $(\mathcal{O}_{A\to B}^\xi)_{\xi\in\Xi}$ be a finite family of quantum channels and
  $(p_\xi)_{\xi\in\Xi}$ be a probability distribution on the quantum channels.
  Assume that one of the quantum channels $\mathcal{O}_{A\to B}^\xi$ is given 
  as an oracle with probability $p_\xi$.
  The quantum channel discrimination problem
  is the problem of determining the index of the given oracle $\xi\in\Xi$ by a quantum algorithm accessing the oracle multiple times.
  This problem is denoted by $\qcdp{p_{\xi}}{\mathcal{O}_{A\to B}^{\xi}}$.
\end{definition}

We also consider the quantum channel group discrimination problem, 
which is a natural generalization of the quantum channel discrimination problem. 
\begin{definition}[Quantum channel group discrimination problem]
  Let $(\mathcal{O}_{A\to B}^\xi)_{\xi\in\Xi}$ be a finite family of quantum channels and
  $(p_\xi)_{\xi\in\Xi}$ be a probability distribution on the quantum channels.
  Let $(C_\eta\ \subseteq\ \Xi)_{\eta\in H}$ be a finite family of subsets of $\Xi$.
  Assume that one of the quantum channels $\mathcal{O}_{A\to B}^\xi$
  is given in the same way as $\qcdp{p_\xi}{\mathcal{O}_{A\to B}^\xi}$.
  The quantum channel group discrimination problem
  is the problem of finding one of the indexes of a subset $\eta\in H$ 
  that satisfies $\xi\in C_\eta$. 
  This problem is denoted by $\qcgdp{p_\xi}{\mathcal{O}_{A\to B}^\xi}{C_\eta}$.
\end{definition}
For example, let $\Xi=\{1,2,3,4\}$, $H=\{A,B,C\}$, $C_A=\{1,2\}, C_B=\{2,3\}$ and $C_C=\{1,3\}$.
When $\xi = 1$, i.e., the given oracle is $\mathcal{O}_{A\to B}^1$, the output is regarded as a correct answer if it is either of $A$ or $C$.
When $\xi = 4$, any output is an incorrect answer.
The quantum channel discrimination problem $\qcdp{p_{\xi}}{\mathcal{O}_{A\to B}^{\xi}}$ can be regarded as 
the quantum channel group discrimination problem $\qcgdp{p_\xi}{\mathcal{O}_{A\to B}^\xi}{C_\eta}$
where $H=\Xi$ and $C_\eta = \{\eta\}$ for all $\eta\in H$.

As shown in Section~\ref{sec:grover}, Grover's search problem can be regarded as an instance of quantum channel group discrimination problem.
Another important example of the quantum channel group discrimnation problem is the problem of evaluation of Boolean functions~\cite{beals2001quantum,ambainis2016quantum}.

We next consider algorithms that solve the quantum channel group discrimination problem.
The general algorithm for the quantum channel group discrimination problem is formulated as follows.

\begin{definition}[Adaptive discrimination algorithm]
  Let $R$ be a working system.
  Let $(\Phi_{BR\to AR}^i)_{i=1}^n$ be a family of quantum channels,
  and $(M_{BR}^\eta)_{\eta\in H}$ be a positive operator-valued measure (POVM)\@. 
  Then a pair $((\Phi_{BR\to AR}^i),(M_{BR}^\eta))$
  represents the following adaptive algorithm for $\qcgdp{p_\xi}{\mathcal{O}_{A\to B}^\xi}{C_\eta}$.
  \begin{algorithmic}[1]
    \State Set the initial state $\ketbra{0}{0}_{BR}$ in a quantum computer.
    \For{$i=1,2,\dotsc,n$}
      \State Apply the quantum channel $\Phi_{BR\to AR}^i$.
      \State Apply the given oracle $\mathcal{O}_{A\to B}^\xi$.
    \EndFor
    \State Apply a measurement $(M_{BR}^\eta)_{\eta\in H}$ to the state
    and output the observed value $\eta\in H$.
  \end{algorithmic}
\end{definition}

An algorithm succeeds if and only if its output $\eta\in H$ satisfies 
$\xi\in C_\eta$, where $\xi\in\Xi$ is the index of the given oracle $\mathcal{O}^\xi_{A\to B}$.
The minimum error probability of a QCGDP is then defined as follows.

\begin{definition}[The minimum error probability of a QCGDP]
  Let $((\Phi_{BR\to AR}^i),(M_{BR}^\eta))$ 
  be an adaptive algorithm for $\qcgdp{p_\xi}{\mathcal{O}_{A\to B}^\xi}{C_\eta}$ with $n$ queries. 
  For each $\xi\in\Xi$, a density operator $\rho_{BR}^\xi$ is defined as
  \begin{align*}
    \rho_{BR}^\xi&\coloneqq \mathcal{O}_{A\to B}^\xi\circ \Phi^n_{BR\to AR}\circ \mathcal{O}_{A\to B}^\xi\circ \Phi^{n-1}_{BR\to AR}\\*
    &\quad\circ\dotsm \circ\mathcal{O}_{A\to B}^\xi \circ \Phi^1_{BR\to AR}(\ketbra{0}{0}_{BR}).
  \end{align*}
  The minimum error probability of $\qcgdp{p_\xi}{\mathcal{O}_{A\to B}^\xi}{C_\eta}$
  with $n$ adaptive queries is defined as
  \begin{align*}
    \perr(n) \coloneqq 
    \min_{(\Phi_{BR\to AR}^i),(M_{BR}^\eta)}
    \sum_{\xi\in\Xi}p_\xi\sum_{\eta\colon\xi\notin C_\eta}
    \Tr\left(M_{BR}^\eta\rho_{BR}^{\xi}\right).
  \end{align*}
  The minimum error probability of $\qcdp{p_\xi}{\mathcal{O}_{A\to B}^\xi}$
  is defined in the same way.
\end{definition}
An algorithm that achieves the minimum error probability exists for any QCGDP
because the set of algorithms is compact and the error probability is continuous.
The main goal of this study is to derive lower bounds of the minimum error 
probability $\perr(n)$.

The error probability $\perr(0)$ without calling the oracle can be 
evaluated as follows:
\begin{align*}
  \perr(0)
  &=\min_{(M_{BR}^\eta)}
  \sum_{\xi\in\Xi}p_\xi\sum_{\eta\colon\xi\notin C_\eta}
  \Tr\left(M_{BR}^\eta\ketbra{0}{0}_{BR}\right)\\
  &=\min_{(M_{BR}^\eta)}
  \sum_{\eta\in H}\Tr\left(M_{BR}^\eta\ketbra{0}{0}_{BR}\right)
  \sum_{\xi\notin C_\eta}p_\xi\\
  &=\min_{\eta\in H}\left(\sum_{\xi\notin C_\eta}p_\xi\right). 
\end{align*}

Once the quantum channels $(\Phi_{BR\to AR}^i)_{i=1}^n$ in 
a discrimination algorithm are fixed, the problem is reduced to 
the quantum state discrimination problem.
The optimal measurement and the error probability of discrimination 
of two quantum states were known by Helstrom~\cite{helstrom1967detection} and Holevo~\cite{holevo1972analogue}.

\begin{proposition}[Holevo--Helstrom theorem~\cite{helstrom1967detection,holevo1972analogue,watrous2018theory}]\label{holevoHelstrom}
  Let $\rho_A^0,\rho_A^1\in\mathsf{D}(A)$ be density operators.
  Let $p_0,p_1\in[0,1]$ be non-negative real numbers
  that satisfy $p_0+p_1=1$.
  For any POVM $(M_A^0,M_A^1)$, the following holds:
  \[
    \sum_{\xi\in\{0,1\}}p_\xi\Tr\left(M_A^\xi\rho_A^\xi\right)
    \leq\frac{1}{2}\left(1+\left\|p_0\rho_A^0-p_1\rho_A^1\right\|_1\right).
  \]
  Moreover, a POVM $(M_A^0,M_A^1)$ exists that satisfies the equality.
\end{proposition}

From the Holevo--Helstrom theorem, the error probability of discrimination
of two quantum channels with a single query is given by the following proposition.

\begin{proposition}[\cite{watrous2018theory}]
  Let $\perr(1)$ be the minimum error probability for 
  $\qcdp{p_0,p_1}{\mathcal{O}_{A\to B}^0,\mathcal{O}_{A\to B}^1}$ with one query.
  The following then holds:
  \begin{align*}
    \perr(1)=\frac{1}{2}\Biggl(1
    &-\max_{\rho_{AR}\in\mathsf{D}(AR)}\Bigl\|p_0(\mathcal{O}_{A\to B}^0\otimes\mathrm{id}_R)(\rho_{AR})\\
    &-p_1(\mathcal{O}_{A\to B}^1\otimes\mathrm{id}_R)(\rho_{AR})\Bigr\|_1\Biggr),
  \end{align*}
  where $\mathrm{id}_R$ denotes the identity map on $\mathsf{L}(R)$.
\end{proposition}
If $\dim(R)\ge\dim(A)$, the maximum of the trace norm is called the diamond norm.

\begin{definition}[Diamond norm]
  Let $\Psi_{A\to B}$ be a linear map from $\mathsf{L}(A)$ to $\mathsf{L}(B)$.
  The diamond norm of $\Psi_{A\to B}$ is then defined as
  \begin{equation*}
    \|\Psi_{A\to B}\|_\diamond\coloneqq\max_{L_{AR}\in\mathsf{L}(AR)}
    \left\|(\Psi_{A\to B}\otimes\mathsf{id}_R)(L_{AR})\right\|_1,
  \end{equation*}
  for a quantum system $R$ whose dimension is equal to $\dim(A)$.
\end{definition}
From the convexity of the trace norm, we can assume that $L_{AR}$ in the maximization problem is a rank-1 matrix.
Especially, in this paper, we consider the diamond norm of Hermitian-preserving map.
In that case, we can assume that $L_{AR}$ is a density operator of pure state~\cite{watrous2018theory}.

It is known that the following holds for any
Hermitian preserving map $\Lambda_{A\to B}$.
\begin{equation*}
  \|\Lambda_{A\to B}\|_\diamond=\max_{\ket{\phi}_{AR}\in\mathsf{S}(AR)}
  \left\|(\Lambda_{A\to B}\otimes\mathsf{id}_R)(\ketbra{\phi}{\phi}_{AR})\right\|_1,
\end{equation*}
where $\mathsf{S}(A)$ denotes a set of state vectors on a quantum system $A$~\cite{watrous2018theory}.

An algorithm is said to be non-adaptive if it calls the oracle in parallel.
In general, the minimum error probability of non-adaptive algorithms with $n$ queries for discriminating two quantum channels is expressed as
\begin{align*}
  \frac{1}{2}\left(1-
  \left\|p_0\left(\mathcal{O}_{A\to B}^0\right)^{\otimes n}-p_1\left(\mathcal{O}_{A\to B}^1\right)^{\otimes n}\right\|_\diamond\right),
\end{align*}
if $\dim(R)\ge \dim(A)^n$.
Although the diamond norm $\|\Psi_{A\to B}\|_\diamond$ can be evaluated in 
polynomial time with respect to the dimensions of $A$ and $B$ via SDP~\cite{watrous2013simpler},
the dimension of $A^{\otimes n}$ is exponentially large, so that SDP 
does not give an efficient algorithm when $n$ is large.
For general adaptive algorithms, no simple formula expressing the minimum 
error probability is known even for the discrimination of two quantum channels.

\section{Main results}\label{sec:result}
In this paper, we present four theorems on the lower bounds 
on the minimum error probability $\perr(n)$.
Let $E$ be a finite-dimensional quantum system.
A Stinespring representation $O_{A\to BE}$ of a quantum channel 
$\mathcal{O}_{A\to B}$ is a linear isometry from $A$ to $B\otimes E$ satisfying
\begin{equation*}
  \mathcal{O}_{A\to B}(\rho_A)
  =\Tr_E\left(O_{A\to BE}\,\rho_A\, O_{A\to BE}^\dagger\right).
\end{equation*}

We first show a lower bound of the error probability 
for the discrimination of two quantum channels by using the Bures angle.
\begin{theorem}\label{boundForQCDPWithBuresAngle}
  Let $\perr(n)$ be the minimum error probability for 
  $\qcdp{p_0,p_1}{\mathcal{O}_{A\to B}^0,\mathcal{O}_{A\to B}^1}$ with $n$ adaptive queries. 
  Let $\tau_\A$ be
  \begin{align*}
    \tau_\A&\coloneqq\arccos\\
    &\min_{\ket{\phi}_{AR}\in\mathsf{S}(AR)}
    \F{\mathcal{O}_{A\to B}^0(\ketbra{\phi}{\phi}_{AR})}{\mathcal{O}_{A\to B}^1(\ketbra{\phi}{\phi}_{AR})}.
  \end{align*}
  Here, the identity map acts on the working system $R$.
  The following then holds:
  \begin{align}
    \perr(n)\geq\frac{1}{2}\left(1-\sqrt{1-4p_0p_1\cos^2(n\tau_\A)}\right),
\label{eq:A2}
  \end{align}
  under the condition $n\tau_\A\le\pi/2$.
  Furthermore, if both the quantum channels are unitary channels,
  then there exists a discrimination algorithm that achieves the lower bound even if $\dim(R)=0$.

  Furthermore, if $\dim(R)\geq\dim(A)$, the following then holds:
  \begin{equation}
    \tau_\A=\arccos\min_{\sigma_A\in\mathsf{D}(A)}\left\|
      \Tr_B\left(O_{A\to BE}^0\sigma_AO_{A\to BE}^{1\dagger}\right)
    \right\|_1,
    \label{eq:ft}
  \end{equation}
  where $O_{A\to BE}^0$ and $O_{A\to BE}^1$ are Stinespring 
  representations of $\mathcal{O}_{A\to B}^0$ and $\mathcal{O}_{A\to B}^1$, respectively.
\end{theorem}
Note that the minimum of the fidelity
\begin{align*}
  &\min_{\ket{\phi}_{AR}\in\mathsf{S}(AR)}
  \F{\mathcal{O}_{A\to B}^0(\ketbra{\phi}{\phi}_{AR})}{\mathcal{O}_{A\to B}^1(\ketbra{\phi}{\phi}_{AR})}
\end{align*}
is referred to as the stabilized process fidelity in~\cite{PhysRevA.71.062310}.

We next show a lower bound based on the trace distance 
for the discrimination of two quantum channels.
\begin{theorem}\label{boundForQCDPWithWeightedTraceDistance}
  Let $\perr(n)$ be the minimum error probability for
  $\qcdp{p_0,p_1}{\mathcal{O}_{A\to B}^0,\mathcal{O}_{A\to B}^1}$ 
  with $n$ adaptive queries.
  Let $k\in\{0,1,\dotsc,n\}$.
  Let $\alpha_0,\alpha_1\in[0,\infty)$ be non-negative real numbers
  that satisfy $p_0\alpha_0^k=p_1\alpha_1^{n-k}$.
  Let $\tau_\diamond^0$ and $\tau_\diamond^1$ be
  \begin{align*}
    \tau_\diamond^0&\coloneqq\max_{\Ket{\phi}_{AR}\in\mathsf{S}(AR)}\frac{1}{2}\left\|
      \left(\mathcal{O}_{A\to B}^0-\alpha_0\mathcal{O}_{A\to B}^1\right)(\Ketbra{\phi}{\phi}_{AR})
    \right\|_1,\\
    \tau_\diamond^1&\coloneqq\max_{\Ket{\phi}_{AR}\in\mathsf{S}(AR)}\frac{1}{2}\left\|
      \left(\alpha_1\mathcal{O}_{A\to B}^0-\mathcal{O}_{A\to B}^1\right)(\Ketbra{\phi}{\phi}_{AR})
    \right\|_1.
  \end{align*}
  The following then holds:
  \begin{align*}
    \perr(n)\geq\frac{1}{2}
      -p_0\left(\sum_{i=0}^{k-1}\alpha_0^i\right)\tau_\diamond^0
      -p_1\left(\sum_{i=0}^{n-k-1}\alpha_1^i\right)\tau_\diamond^1,
  \end{align*}
  for arbitrary $\alpha_0$, $\alpha_1$ and, $k$.

  Furthermore, if $\dim(R)\geq\dim(A)$, the following then holds~\cite{watrous2018theory}:
  \begin{align*}
    \tau_\diamond^0
    &=\frac{1}{2}\left\|\mathcal{O}_{A\to B}^0-\alpha_0\mathcal{O}_{A\to B}^1\right\|_\diamond,\\
    \tau_\diamond^1
    &=\frac{1}{2}\left\|\alpha_1\mathcal{O}_{A\to B}^0-\mathcal{O}_{A\to B}^1\right\|_\diamond.
  \end{align*}
\end{theorem}

We also show a lower bound of the error probability 
for QCGDP by using the Bures angle.
\begin{theorem}\label{boundForQCGDPWithBuresAngle}
  Let $\perr(n)$ be the minimum error probability for 
  $\qcgdp{p_\xi}{\mathcal{O}_{A\to B}^\xi}{C_\eta}$ with $n$ adaptive queries.
  For a quantum channel $\Psi_{AR\to BR}$ and $m\in\{0,1,\dotsc,n\}$, 
  $\theta_\A(\Psi_{AR\to BR})$ and $\theta_m$ are defined as
  \begin{align*}
    &\theta_\A(\Psi_{AR\to BR})\coloneqq\arccos\min_{\ket{\phi}_{AR}\in\mathsf{S}(AR)}\sum_{\xi\in\Xi}p_\xi\\
    &\qquad\qquad\F{\mathcal{O}_{A\to B}^\xi(\ketbra{\phi}{\phi}_{AR})}{\Psi_{AR\to BR}(\ketbra{\phi}{\phi}_{AR})},\\
    &\theta_m\coloneqq\arccos\sqrt{\perr(m)}.
  \end{align*}
  The following then holds:
  \begin{align*}
    \perr(n)\geq\cos^2\bm((n-m)\theta_\A(\Psi_{AR\to BR})+\theta_m\bm),
  \end{align*}
  for arbitrary $\Psi_{AR\to BR}$ and $m$
  satisfying $(n-m)\theta_\A(\Psi_{AR\to BR})+\theta_m\le\pi/2$.

  Furthermore, if $\dim(R)\geq\dim(A)$, the following then holds:
  \begin{align*}
    \theta_\A(\Psi_{A\to B}\otimes\mathrm{id}_R)
    &=\arccos\min_{\sigma_A\in\mathsf{D}(A)}\sum_{\xi\in\Xi}p_\xi\\
    &\left\|\Tr_B\left(O_{A\to BE}^\xi\sigma_{A}(V_{A\to BE})^\dagger\right)\right\|_1,
  \end{align*}
  where $O_{A\to BE}^\xi$ is a Stinespring representation of 
  $\mathcal{O}_{A\to B}^\xi$ for $\xi\in\Xi$, and $V_{A\to BE}$ is 
  a Stinespring representation of $\Psi_{A\to B}$.
\end{theorem}

Finally, we show a lower bound of the error probability for QCGDP 
by using the trace distance.
\begin{theorem}\label{boundForQCGDPWithWeightedTraceDistance}
  Let $\perr(n)$ be the minimum error probability for 
  $\qcgdp{p_\xi}{\mathcal{O}_{A\to B}^\xi}{C_\eta}$ with $n$ adaptive queries.
  Let $m$ and $k$ be integers that satisfy $0\leq m\leq k\leq n$.
  Let $\alpha_0,\alpha_1\in[0,\infty)$ be non-negative real numbers
  that satisfy $\alpha_0^{n-k}=\alpha_1^{k-m}$.
  For quantum channels $\Psi_{AR\to BR}^0$ and $\Psi_{AR\to BR}^1$,
  $\theta_\diamond^0(\Psi_{AR\to BR}^0)$ and
  $\theta_\diamond^1(\Psi_{AR\to BR}^1)$ are defined as
  \begin{align*}
    &\theta_\diamond^0(\Psi_{AR\to BR}^0)
    \coloneqq\max_{\ket{\phi}_{AR}\in\mathsf{S}(AR)}\sum_{\xi\in\Xi}p_\xi\\
    &\qquad\qquad\frac{1}{2}\left\|
      \left(\mathcal{O}_{A\to B}^\xi-\alpha_0\Psi_{AR\to BR}^0\right)(\ketbra{\phi}{\phi}_{AR})
    \right\|_1,\\
    &\theta_\diamond^1(\Psi_{AR\to BR}^1)
    \coloneqq\max_{\ket{\phi}_{AR}\in\mathsf{S}(AR)}\sum_{\xi\in\Xi}p_\xi\\
    &\qquad\qquad\frac{1}{2}\left\|
      \left(\alpha_1\mathcal{O}_{A\to B}^\xi-\Psi_{AR\to BR}^1\right)(\ketbra{\phi}{\phi}_{AR})
    \right\|_1,
  \end{align*}
  respectively. The following then holds:
  \begin{align*}
    \perr(n)\geq\perr(m)
    &-\left(\sum_{i=0}^{n-k-1}\alpha_0^i\right)\theta_\diamond^0(\Psi_{AR\to BR}^0)\\
    &-\left(\sum_{i=0}^{k-m-1}\alpha_1^i\right)\theta_\diamond^1(\Psi_{AR\to BR}^1),
  \end{align*}
  for arbitrary $\Psi_{AR\to BR}^0$, $\Psi_{AR\to BR}^1$, $m$, $\alpha_0$, $\alpha_1$, and $k$.
\end{theorem}

The proofs of Theorems~\ref{boundForQCDPWithBuresAngle},
\ref{boundForQCDPWithWeightedTraceDistance},~\ref{boundForQCGDPWithBuresAngle},
and~\ref{boundForQCGDPWithWeightedTraceDistance} are presented in Section~\ref{sec:proof}.
In Section~\ref{sec:sdp}, we show that all optimization problems on 
quantum states $\ket{\phi}_{AR}\in\mathsf{S}(AR)$ for computing the lower bounds in Theorems~\ref{boundForQCDPWithBuresAngle},
\ref{boundForQCDPWithWeightedTraceDistance},~\ref{boundForQCGDPWithBuresAngle},
and~\ref{boundForQCGDPWithWeightedTraceDistance} can be written as SDP when 
$\dim(R)\geq\dim(A)$, so that the lower bounds can be evaluated efficiently.
In this case, the optimization of the lower bounds in Theorems~\ref{boundForQCGDPWithBuresAngle} and~\ref{boundForQCGDPWithWeightedTraceDistance}
with respect to the quantum channel $\Psi_{AR\to BR}$ with a constraint $\Psi_{AR\to BR}=\Psi_{A\to B}\otimes\mathrm{id}_R$ for some quantum channel $\Psi_{A\to B}$
can be solved efficiently as well.

\section{Applications}\label{sec:application}
\subsection{Application of Theorems~\ref{boundForQCDPWithBuresAngle} and~\ref{boundForQCDPWithWeightedTraceDistance}: Discrimination of two amplitude damping channels}\label{sec:qadc_two}
In this section, we derive lower bounds on the error probability of the discrimination of 
two amplitude damping channels by using Theorems~\ref{boundForQCDPWithBuresAngle} and~\ref{boundForQCDPWithWeightedTraceDistance}.
\begin{definition}
  The amplitude damping channel $\E_A^r$ with damping rate 
  $r\in[0,1]$ is defined by a Kraus representation
  \begin{gather*}
    \E_A^r(\rho_A)\coloneqq K_A^{0,r}\rho_A(K_A^{0,r})^\dagger
    +K_A^{1,r}\rho_A(K_A^{1,r})^\dagger,\\
    K_A^{0,r}\coloneqq\Ketbra{0}{0}_A+\sqrt{1-r}\Ketbra{1}{1}_A,\,
    K_A^{1,r}\coloneqq\sqrt{r}\Ketbra{0}{1}_A.
  \end{gather*}
\end{definition}
Let $E$ be a two-dimensional quantum system.
Let $U_{A\to AE}^r\coloneqq K_A^{0,r}\otimes\ket{0}_E+K_A^{1,r}\otimes\ket{1}_E$,
then $U_{A\to AE}^r$ is a Stinespring representation of $\E_A^r$.
For $\qcdp{p_0,p_1}{\E_A^{r_0},\E_A^{r_1}}$, we obtain the following 
bound of $\tau_\A$ in Theorem~\ref{boundForQCDPWithBuresAngle}
with a two-dimensional working system $R$:
\begin{align*}
  \tau_\A
  &=\arccos\min_{\sigma_A\in\mathsf{D}(A)}\left\|
    \Tr_A\bm(U_{A\to AE}^{r_1}\sigma_A(U_{A\to AE}^{r_0})^\dagger\bm)
  \right\|_1\\
  &\leq\arccos\min_{\sigma_A\in\mathsf{D}(A)}\left|
    \Tr_{AE}\bm(U_{A\to AE}^{r_1}\sigma_A(U_{A\to AE}^{r_0})^\dagger\bm)
  \right|\\
  &=\arccos\min_{\sigma_A\in\mathsf{D}(A)}\Bigl|\Braket{0|\sigma_A|0}_A\\
  &\qquad\qquad
    +\left(\sqrt{r_0r_1}+\sqrt{(1-r_0)(1-r_1)}\right)\Braket{1|\sigma_A|1}_A
  \Bigr|\\
  &=\arccos\bm{\left(}\sqrt{r_0r_1}+\sqrt{(1-r_0)(1-r_1)}\bm{\right)}.
\end{align*}
The inequality follows from the monotonicity of the trace norm 
(i.e., $\|\cdot\|_1\geq|\Tr(\cdot)|$).
Note that $\sigma_A=\Ketbra{1}{1}_A$ satisfies the equality.
Let the Bhattacharyya angle of $r_0$ and $r_1$ be 
$\Delta(r_0,r_1):=\arccos\bm(\sqrt{r_0r_1}+\sqrt{(1-r_0)(1-r_1)}\bm)$.
We then obtain
\begin{align*}
  \perr(n)\geq\frac{1}{2}
  \left(1-\sqrt{1-4p_0p_1\cos^2\bm(n\Delta(r_0,r_1)\bm)}\right)
\end{align*}
from Theorem~\ref{boundForQCDPWithBuresAngle}
under the condition $n\Delta(r_0,r_1)\in[0,\pi/2]$.
\begin{figure}[t]
  \centering
  \includegraphics[width=\columnwidth]{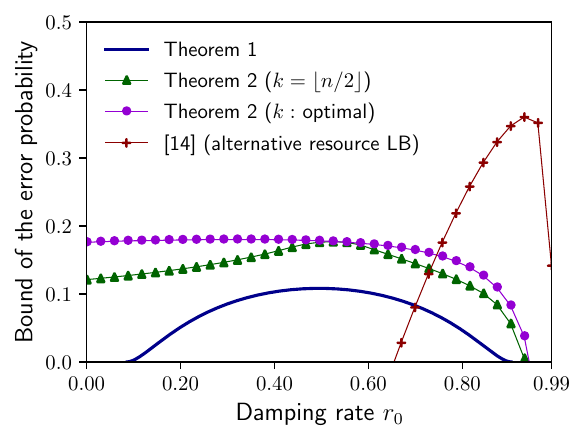}
  \caption{
    The lower bounds of two amplitude damping channel discrimination
    ($p_0=p_1=1/2, r_1=r_0+0.01,n=90$);
    continuous line, the bound from Theorem~\ref{boundForQCDPWithBuresAngle};
    $\blacktriangle$, the bound from Theorem~\ref{boundForQCDPWithWeightedTraceDistance}
    where $k=\lfloor n/2\rfloor$;
    $\bullet$, the bound from Theorem~\ref{boundForQCDPWithWeightedTraceDistance}
    where $k$ is optimal;
    $\bm{+},$ the bound from~\cite{pereira2021bounds}.
  }
  \label{fig:two_qadc_r0}
\end{figure}
\begin{figure}[t]
  \centering
  \includegraphics[width=\columnwidth]{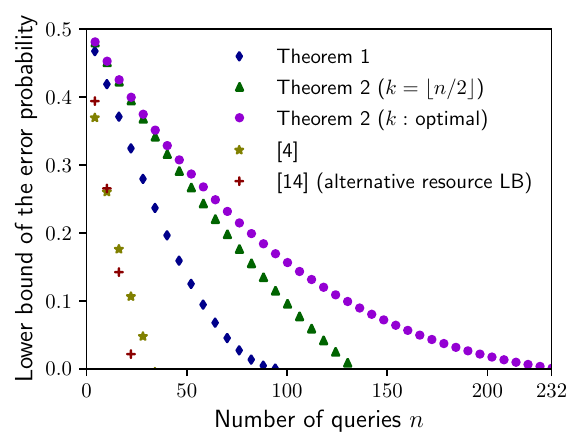}
  \caption{
    The lower bounds of two amplitude damping channel discrimination
    ($p_0=p_1=1/2, r_0=0.10,r_1=0.11$);
    $\blacklozenge$, the bound from Theorem~\ref{boundForQCDPWithBuresAngle};
    $\bigstar$, the bound from~\cite{pirandola2019fundamental};
    the other marks are the same as Fig.~\ref{fig:two_qadc_r0}.
  }
  \label{fig:two_qadc_n}
\end{figure}

Next, we consider the lower bound obtained from
Theorem~\ref{boundForQCDPWithWeightedTraceDistance}.
Let $k\in\{0,1,\dotsc,n\}$.
Let $\alpha_0,\alpha_1\in[0,\infty)$ be non-negative real numbers
that satisfy $p_0\alpha_0^k=p_1\alpha_1^{n-k}$. We then obtain
\begin{align*}
  \perr(n)\geq\frac{1}{2}\Biggl[1
    &-p_0\left(\sum_{i=0}^{k-1}\alpha_0^i\right)
    \left\|\E_A^{r_0}-\alpha_0\E_A^{r_1}\right\|_\diamond\\
    &-p_1\left(\sum_{i=0}^{n-k-1}\alpha_1^i\right)
    \left\|\alpha_1\E_A^{r_0}-\E_A^{r_1}\right\|_\diamond
  \Biggr].
\end{align*}

The numerical results for two amplitude damping channel discrimination
are shown in Figs.~\ref{fig:two_qadc_r0} and~\ref{fig:two_qadc_n}.
In both the figures, the distribution of the channels is uniform, 
i.e., $p_0=p_1=1/2$.
In Fig.~\ref{fig:two_qadc_r0}, the damping rates of the channels 
are $r_0$ and $r_1=r_0+0.01$. The number of queries is $n=90$.
In Fig.~\ref{fig:two_qadc_n}, the damping rates are $r_0=0.10$ 
and $r_1=0.11$. Compared with the lower bounds obtained by
Pirandola et al.~\cite{pirandola2019fundamental}
and Pereira and Pirandola~\cite{pereira2021bounds} (alternative resource lower bound, which is specific to the amplitude damping channel),
the lower bound given by Theorems~\ref{boundForQCDPWithBuresAngle}
and~\ref{boundForQCDPWithWeightedTraceDistance}
are better for some $r_0$ in Fig.~\ref{fig:two_qadc_r0},
and better for all $n$ in Fig.~\ref{fig:two_qadc_n}.
In Fig.~\ref{fig:two_qadc_r0}, the lower bound from~\cite{pirandola2019fundamental} is identically $0$.
Although Theorem~\ref{boundForQCDPWithBuresAngle} gives the analytic lower bound, the lower bound from Theorem~\ref{boundForQCDPWithWeightedTraceDistance}
gives better lower bound in this case.

In the numerical results, we optimized $\alpha_0$ for given $k$ by the 
golden-section search because the function empirically has a single maximal point.
We optimized $k$ in an exhaustive way under the assumption that the optimum 
$k^*$ is non-decreasing with respect to $r_0$ and $n$.

\subsection{Application of Theorem~\ref{boundForQCGDPWithBuresAngle}: Grover's search problem}\label{sec:grover}
In this section, we prove the optimality of Grover's algorithm by using Theorem~\ref{boundForQCGDPWithBuresAngle}.
Let $A$ be an $|H|$-dimensional quantum system, and 
$(\ket\eta_A)_{\eta\in H}$ be its computational basis.
For some fixed $\ell\in\{1,2,\dotsc,|H|/2\}$, 
$\Xi\coloneqq\{\xi\ \subseteq\  H\mid |\xi|=\ell\}$.
Let $(O_A^\xi)_{\xi\in\Xi}$ be a family of unitary operators
defined as $O_A^\xi \coloneqq I_A-2\sum_{u\in \xi}\Ketbra{u}{u}_A$,
where $I_A$ is the identity operator in $A$,
Let $(\mathcal{O}_A^\xi)_{\xi\in\Xi}$ be a family of quantum channels defined as
\begin{align*}
  \mathcal{O}_A^\xi(\rho_A) \coloneqq O_A^\xi \rho_A O_A^{\xi\dagger}.
\end{align*}
Let $(C_\eta)_{\eta\in H}$ be a family of subsets of $\Xi$ defined as
$C_\eta \coloneqq \{\xi\in\Xi \mid \eta\in\xi\}$.
The problem $\qcgdp{1/|\Xi|}{\mathcal{O}_A^\xi}{C_\eta}$ is then called Grover's search problem.
The following holds:
\begin{align*}
  \frac{1}{|\Xi|}\sum_{\xi\in\Xi}O_A^\xi
  &=\binom{|H|}{\ell}^{-1}
  \left[\binom{|H|-1}{\ell}-\binom{|H|-1}{\ell-1}\right]I_A\\
  &=\left(1-\frac{2\ell}{|H|}\right)I_A.
\end{align*}
Hence, we obtain
\begin{align*}
  &\theta_\A(\mathrm{id}_A)\\
  &=\arccos\min_{\ket{\phi}_{AR}\in\mathsf{S}(AR)}\sum_{\xi\in\Xi}\frac{1}{|\Xi|}
  \F{\mathcal{O}_A^\xi(\ketbra{\phi}{\phi}_{AR})}{\ketbra{\phi}{\phi}_{AR}}\\
  &=\arccos\min_{\ket{\phi}_{AR}\in\mathsf{S}(AR)}\sum_{\xi\in\Xi}\frac{1}{|\Xi|}
  \left|\Braket{\phi|O_A^\xi|\phi}_{AR}\right|\\
  &\leq\arccos\min_{\ket{\phi}_{AR}\in\mathsf{S}(AR)}
  \left|\sum_{\xi\in\Xi}\frac{1}{|\Xi|}\Braket{\phi|O_A^\xi|\phi}_{AR}\right|\\
  &=\arccos\left(1-\frac{2\ell}{|H|}\right)\\
  &=2\arcsin\sqrt{\frac{\ell}{|H|}}.
\end{align*}
Moreover, we obtain
\begin{align*}
  \theta_0
  &=\arccos\sqrt{1-
    \max_{\eta\in H}\left(\sum_{\xi\in C_\eta}\frac{1}{|\Xi|}\right)}\\
  &=\arcsin\sqrt{
    \max_{\eta\in H}\left(\sum_{\xi\in C_\eta}\frac{1}{|\Xi|}\right)}\\
  &=\arcsin\sqrt{\binom{|H|-1}{\ell-1}\binom{|H|}{\ell}^{-1}}\\
  &=\arcsin\sqrt{\frac{\ell}{|H|}}.
\end{align*}
Therefore, we obtain the lower bound
\begin{align*}
  \perr(n)\geq\cos^2\left((2n+1)\arcsin\sqrt{\frac{\ell}{|H|}}\right),
\end{align*}
from Theorem~\ref{boundForQCGDPWithBuresAngle} under 
the condition $(2n+1)\allowbreak\arcsin(\sqrt{\ell/|H|})\in[0,\pi/2]$.
This lower bound is achieved by Grover's algorithm~\cite{grover1996fast,brassard2002quantum} .
The lower bound for $\ell=1$ was obtained by 
Zalka~\cite{zalka1999grover}.\footnote{Zalka wrote: ``It seems very plausible that the proof can be 
extended to oracles with any known number of marked elements.'' 
However, no formal proof is known to the best of our knowledge.}

\subsection{Application of Theorem~\ref{boundForQCGDPWithWeightedTraceDistance}: Channel position finding for amplitude damping channels}\label{sec:qadc_multi}
In this section, we consider the channel position finding problem introduced in~\cite{zhuang2020ultimate} for amplitude damping channels to demonstrate 
the lower bound given by Theorem~\ref{boundForQCGDPWithWeightedTraceDistance}.
Let $\ell\geq 2$ be an integer.
Let $A_1,\dotsc,A_\ell$ be two-dimensional quantum systems.
Let $A'$ be the composite system of $A_1,\dotsc,A_\ell$,
i.e., $A'=\bigotimes_{i=1}^\ell A_i$.
Let $\Xi\coloneqq\{1,2,\dotsc,\ell\}$, and $r_0,r_1\in[0,1]$ .
For each $\xi\in\Xi$, a quantum channel $\mathcal{O}_{A'}^\xi$ is defined as
\[
  \mathcal{O}_{A'}^\xi\coloneqq 
  \left(\bigotimes_{i=1}^{\xi-1}\E_{A_i}^{r_1}\right)
  \otimes\E_{A_\xi}^{r_0}\otimes
  \left(\bigotimes_{i=\xi+1}^{\ell}\E_{A_i}^{r_1}\right).
\]
In these quantum channels, $\mathcal{E}_A^{r_1}$ is applied for all but one subsystem.
For the one exceptional system, $\mathcal{E}_A^{r_0}$ is applied.
Our task is to find the place where $\mathcal{E}_A^{r_0}$ is applied.
This problem is formulated as the channel discrimination problem $\qcdp{1/\ell}{\mathcal{O}_{A'}^\xi}$.
We consider the error probability of discrimination algorithms using a working system $R'=R^{\otimes\ell}$ where $R$ is a two-dimensional working system.
Let $k\in\{0,1,\dotsc,n\}$.
Let $\alpha_0,\alpha_1\in[0,\infty)$ be non-negative real numbers
that satisfy $\alpha_0^{n-k}=\alpha_1^k$. We then obtain
\begin{align*}
  &\left\|
    \mathcal{O}_{A'}^\xi-\alpha_0\bigotimes_{i=1}^\ell\E_{A_i}^{r_1}
  \right\|_\diamond\\
  &=\left\|
    \left(\bigotimes_{i=1}^{\xi-1}\E_{A_i}^{r_1}\right)
    \otimes\left(\E_{A_\xi}^{r_0}-\alpha_0\E_{A_\xi}^{r_1}\right)\otimes
    \left(\bigotimes_{i=\xi+1}^{\ell}\E_{A_i}^{r_1}\right)
  \right\|_\diamond\\
  &=\left(\prod_{i=1}^{\xi-1}\Bigl\|\E_{A_i}^{r_1}\Bigr\|_\diamond\right)
    \Bigl\|\E_{A_\xi}^{r_0}-\alpha_0\E_{A_\xi}^{r_1}\Bigr\|_\diamond
    \left(\prod_{i=\xi+1}^{\ell}\Bigl\|\E_{A_i}^{r_1}\Bigr\|_\diamond\right)\\
  &=\left\|\E_{A}^{r_0}-\alpha_0\E_{A}^{r_1}\right\|_\diamond.
\end{align*}
The second equality follows from the multiplicativity of 
the diamond norm~\cite{watrous2018theory}.

Hence, $\theta_\diamond^0\left(\bigotimes_{i=1}^\ell\E_{A_i}^{r_1}\right)$
in Theorem~\ref{boundForQCGDPWithWeightedTraceDistance} is bounded as
\begin{align*}
  &\theta_\diamond^0\left(\bigotimes_{i=1}^\ell\E_{A_i}^{r_1}\right)\\
  &=\max_{\ket{\phi}_{A'R'}\in\mathsf{S}(A'R')}\sum_{\xi\in\Xi}\frac{1}{2\ell}\left\|
    \left(\mathcal{O}_{A'}^\xi-\alpha_0\bigotimes_{i=1}^\ell\E_{A_i}^{r_1}\right)(\ketbra{\phi}{\phi}_{A'R'})
  \right\|_1\\
  &\leq\sum_{\xi\in\Xi}\frac{1}{2\ell}\left\|
    \mathcal{O}_{A'}^\xi-\alpha_0\bigotimes_{i=1}^\ell\E_{A_i}^{r_1}
  \right\|_\diamond\\
  &=\frac{1}{2}\left\|\E_A^{r_0}-\alpha_0\E_A^{r_1}\right\|_\diamond.
\end{align*}
Note that 
\[
  \ket{\phi}_{A'R'}=\left(
    \underset{\ket{\phi}_{AR}\in\mathsf{S}(AR)}{\argmax}\left\|
      (\E_A^{r_0}-\alpha_0\E_A^{r_1})(\ketbra{\phi}{\phi}_{AR})
    \right\|_1
  \right)^{\otimes\ell}
\]
satisfies the equality. Similarly, we obtain
\begin{align*}
  \theta_\diamond^1\left(\bigotimes_{i=1}^\ell\E_{A_i}^{r_1}\right)
  =\frac{1}{2}\left\|\alpha_1\E_A^{r_0}-\E_A^{r_1}\right\|_\diamond.
\end{align*}
From $\perr(0)=1-1/\ell$, we obtain
\begin{align*}
  \perr(n)\geq 1-\frac{1}{\ell}
  &-\left(\sum_{i=0}^{n-k-1}\frac{\alpha_0^i}{2}\right)
  \left\|\E_A^{r_0}-\alpha_0\E_A^{r_1}\right\|_\diamond\\
  &-\left(\sum_{i=0}^{k-1}\frac{\alpha_1^i}{2}\right)
  \left\|\alpha_1\E_A^{r_0}-\E_A^{r_1}\right\|_\diamond.
\end{align*}
\begin{figure}[t]
  \centering
  \includegraphics[width=\columnwidth]{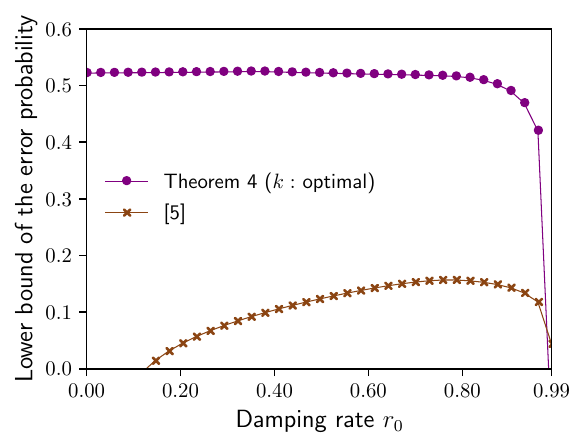}
  \caption{
    The lower bounds on the channel position finding problem for amplitude damping channels
    ($\ell=3, n=15, p_\xi=1/\ell\ (\xi\in\Xi), r_1=r_0+0.01$);
    $\bullet$, the bound from Theorem~\ref{boundForQCGDPWithWeightedTraceDistance}
    where $k$ is optimal;
    $\times$, the bound from~\cite{zhuang2020ultimate}.
  }
  \label{fig:multiple_qadc_r0}
\end{figure}
\begin{figure}[t]
  \centering
  \includegraphics[width=\columnwidth]{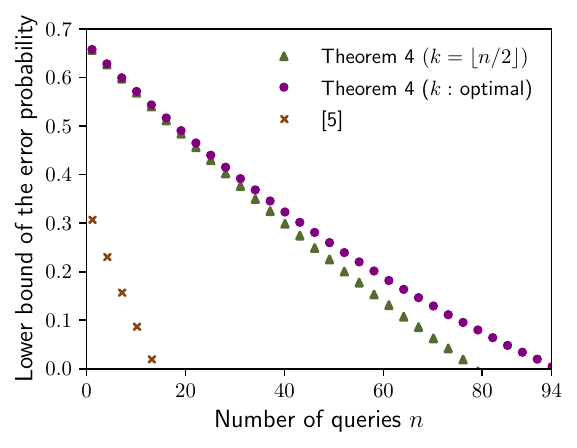}
  \caption{
    The lower bounds on the channel position finding problem for amplitude damping channels
    ($\ell=3, p_\xi=1/\ell\ (\xi\in\Xi), r_0=0.10, r_1=0.11$);
    $\blacktriangle$, the bound from Theorem~\ref{boundForQCGDPWithWeightedTraceDistance}
    where $k=\lfloor n/2\rfloor$;
    $\bullet$, the bound from Theorem~\ref{boundForQCGDPWithWeightedTraceDistance}
    where $k$ is optimal;
    $\times$, the bound from~\cite{zhuang2020ultimate}.
  }
  \label{fig:multiple_qadc_n}
\end{figure}
The numerical results for the channel position finding problem for amplitude damping channels
are shown in Figs.~\ref{fig:multiple_qadc_r0}
and~\ref{fig:multiple_qadc_n}.
The distribution of the channels is uniform, i.e., $p_\xi=1/\ell$ for $\xi\in\Xi$.
The number of oracles is $\ell=3$.
In Fig.~\ref{fig:multiple_qadc_r0}, the damping rates of the channels 
are $r_0$ and $r_1=r_0+0.01$.
In Fig.~\ref{fig:multiple_qadc_r0}, the number of queries is $n=15$.
The parameters $\alpha_0$ and $\alpha_1,$ are optimized to maximize the lower bound.
In Fig.~\ref{fig:multiple_qadc_n},
the damping rates of the channels are $r_0=0.10$ and $r_1=0.11$.
Compared with a lower bound obtained by Zhuang and Pirandola~\cite{zhuang2020ultimate},
the lower bound given by Theorem~\ref{boundForQCGDPWithWeightedTraceDistance}
is better for all $r_0$ in Figs.~\ref{fig:multiple_qadc_r0}
and for all $n$ in Fig.~\ref{fig:multiple_qadc_n}.

\section{Proofs of the Theorems}\label{sec:proof}
\subsection{Idea of the proofs}
For describing the ideas of the proofs, we consider a simple case $\qcdp{1/2,1/2}{\mathcal{O}_{A\to B}^0,\mathcal{O}_{A\to B}^1}$ in this section.
  Let $((\Phi_{BR\to AR}^j),(M_{BR}^\eta))$
  be a discrimination algorithm that achieves the minimum error probability
  $\perr(n)$ for this problem with $n$ queries.
From Proposition~\ref{holevoHelstrom}, the error probability of the algorithm is
\begin{equation*}
\frac{1}{2}\left(1-\frac12\left\|\rho_{BR}^0-\rho_{BR}^1\right\|_1\right),
\end{equation*}
where
\begin{align}
\rho_{BR}^0 &:= \mathcal{O}_{A\to B}^0 \circ \Phi_{BR\to AR}^n\circ
\mathcal{O}_{A\to B}^0 \circ \Phi_{BR\to AR}^{n-1}\circ\dotsm\nonumber\\
&\qquad\circ \mathcal{O}_{A\to B}^0 \circ \Phi_{BR\to AR}^{1}\left(\ketbra{0}{0}_{BR}\right)\nonumber\\
\rho_{BR}^1 &:= \mathcal{O}_{A\to B}^1 \circ \Phi_{BR\to AR}^n\circ
\mathcal{O}_{A\to B}^1 \circ \Phi_{BR\to AR}^{n-1}\circ\dotsm\nonumber\\
&\qquad\circ \mathcal{O}_{A\to B}^1 \circ \Phi_{BR\to AR}^{1}\left(\ketbra{0}{0}_{BR}\right).
\label{eq:rho01}
\end{align}
We use the well-known hybrid argument for upper bounding $\frac12\|\rho_{BR}^0-\rho_{BR}^1\|_1$~\cite{bennett1997strengths}.
Let
\begin{align*}
&\rho_{BR}^{(j)} := 
\mathcal{O}_{A\to B}^1 \circ \Phi_{BR\to AR}^n\circ
\dotsm \circ \mathcal{O}_{A\to B}^1 \circ \Phi_{BR\to AR}^{j+1}\\
&\circ\mathcal{O}_{A\to B}^0 \circ \Phi_{BR\to AR}^j\circ
\dotsm \circ \mathcal{O}_{A\to B}^0 \circ \Phi_{BR\to AR}^{1}\left(\ketbra{0}{0}_{BR}\right),
\end{align*}
for $j=0,1,\dotsc,n$.
In the definition of $\rho_{BR}^{(j)}$, we invoke $\mathcal{O}_{A\to B}^0$ for the first $j$ oracle calls, and $\mathcal{O}_{A\to B}^1$ for the last $n-j$ oracle calls.
Then, $\rho_{BR}^{(0)} = \rho_{BR}^1$ and $\rho_{BR}^{(n)} = \rho_{BR}^0$.
Hence,
\begin{align*}
\frac12\|\rho_{BR}^0-\rho_{BR}^1\|_1
&= \frac12\|\rho_{BR}^{(n)}-\rho_{BR}^{(0)}\|_1\\
&\le \sum_{j=0}^{n-1}\frac12\left\|\rho_{BR}^{(j+1)}-\rho_{BR}^{(j)}\right\|_1.
\end{align*}
Here, we used the triangle inequality of the trace distance.
Hence, it is sufficient to upper bound $\frac12\|\rho_{BR}^{(j+1)}-\rho_{BR}^{(j)}\|_1$ for each $j=0,1,\dotsc,n-1$.
For each $j=0,1,\dotsc,n-1$, we obtain
\begin{align*}
&\frac12\left\|\rho_{BR}^{(j+1)}-\rho_{BR}^{(j)}\right\|_1\\
&=
\frac12\Bigl\|
\mathcal{O}_{A\to B}^1 \circ \Phi_{BR\to AR}^n\circ
\dotsm \circ \mathcal{O}_{A\to B}^1 \circ \Phi_{BR\to AR}^{j+2}\\
&\quad\circ\mathcal{O}_{A\to B}^0 \circ \Phi_{BR\to AR}^{j+1}\circ
\dotsm \circ \mathcal{O}_{A\to B}^0 \circ \Phi_{BR\to AR}^{1}\left(\ketbra{0}{0}_{BR}\right)\\
&\quad- \mathcal{O}_{A\to B}^1 \circ \Phi_{BR\to AR}^n\circ
\dotsm \circ \mathcal{O}_{A\to B}^1 \circ \Phi_{BR\to AR}^{j+1}\\
&\quad\circ\mathcal{O}_{A\to B}^0 \circ \Phi_{BR\to AR}^j\circ
\dotsm \circ \mathcal{O}_{A\to B}^0 \circ \Phi_{BR\to AR}^{1}\left(\ketbra{0}{0}_{BR}\right)\Bigr\|_1\\
&\le
\frac12\Bigl\|
\mathcal{O}_{A\to B}^0 \circ \Phi_{BR\to AR}^{j+1}\circ
\dotsm \circ \mathcal{O}_{A\to B}^0 \circ \Phi_{BR\to AR}^{1}\left(\ketbra{0}{0}_{BR}\right)\\
&\quad-\mathcal{O}_{A\to B}^1\circ\Phi_{BR\to AR}^{j+1}\circ\mathcal{O}_{A\to B}^0 \circ \Phi_{BR\to AR}^j\\
&\quad\circ \dotsm \circ \mathcal{O}_{A\to B}^0 \circ \Phi_{BR\to AR}^{1}\left(\ketbra{0}{0}_{BR}\right)\Bigr\|_1\\
&\le
\max_{\rho_{AR}}\frac12\left\|
\mathcal{O}_{A\to B}^0(\rho_{AR})
-\mathcal{O}_{A\to B}^1(\rho_{AR})\right\|_1.
\end{align*}
In the first inequality, we used the monotonicity of the trace distance, i.e., $\frac12\|\Phi(\rho)-\Phi(\sigma)\|_1\le\frac12\|\rho-\sigma\|_1$
for any quantum channel $\Phi$ and quantum states $\rho$ and $\sigma$.
In the second inequality, we replace the quantum state 
\begin{equation*}
\Phi_{BR\to AR}^{j+1}\circ\mathcal{O}_{A\to B}^0 \circ \dotsm\circ\mathcal{O}_{A\to B}^0 \circ \Phi_{BR\to AR}^{1}\left(\ketbra{0}{0}_{BR}\right),
\end{equation*}
with a general quantum state $\rho_{AR}$.
Hence,
\begin{align}
\frac12\|\rho_{BR}^0-\rho_{BR}^1\|_1
\le
\frac{n}2\max_{\rho_{AR}}\left\|
\mathcal{O}_{A\to B}^0(\rho_{AR})
-\mathcal{O}_{A\to B}^1(\rho_{AR})\right\|_1.
\label{eq:trivialub}
\end{align}
Note that the maximum of the trace distance in~\eqref{eq:trivialub} is equal to the diamond distance $\|\mathcal{O}_{A\to B}^0-\mathcal{O}_{A\to B}^1\|_\diamond$
when $\dim(R)\ge\dim(A)$.

In this paper, we improve the upper bound~\eqref{eq:trivialub} of the trace distance.
We will employ two ideas to improve the upper bound.
In the first idea, we use another distance measure, namely the Bures angle (Theorem~\ref{boundForQCDPWithBuresAngle}).
In the second idea, we introduce weights for the series of quantum states $(\rho_{BR}^{(j)})_j$ (Theorem~\ref{boundForQCDPWithWeightedTraceDistance}).
Regarding to the first idea, the bound~\eqref{eq:trivialub} can be straightforwardly generalized to
 arbitrary distance measure $D$ satisfying the triangle inequality and the monotonicity as
\begin{align}
D(\rho_{BR}^0,\rho_{BR}^1)
\le
n\max_{\rho_{AR}}
D\left(\mathcal{O}_{A\to B}^0(\rho_{AR}), \mathcal{O}_{A\to B}^1(\rho_{AR})\right).
\label{eq:gtrivialub}
\end{align}
We will use this inequality where $D$ is the Bures angle.

\subsection{Proof of Theorem~\ref{boundForQCDPWithBuresAngle}}
In the proof of Theorem~\ref{boundForQCDPWithBuresAngle}, we use the Bures angle in place of the trace distance.
For bounding the trace distance by using the Bures angle or equivalently fidelity, we use the well-known Fuchs--van~de~Graaf inequality
$\|\rho_A^0-\rho_A^1\|_1\leq 2\sqrt{1-\F{\rho_A^0}{\rho_A^1}^2}$~\cite{fuchs1999cryptographic}. 
The following lemma is a minor generalization of the Fuchs--van~de~Graaf inequality.
\begin{lemma}\label{fuchsVanDeGraaf}
  Let $a_0,a_1\in[0,\infty)$ be non-negative real numbers.
  Let $\rho_A^0,\rho_A^1\in\mathsf{D}(A)$ be density operators.
  The following then holds:
  \[
    \left\|a_0\rho_A^0-a_1\rho_A^1\right\|_1 
    \leq\sqrt{(a_0+a_1)^2-4a_0a_1\F{\rho_A^0}{\rho_A^1}^2}. 
  \]
The equality holds if $\rho_A^0$ and $\rho_A^1$ are pure states.
\end{lemma}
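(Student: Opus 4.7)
The plan is to reduce the inequality to a pure-state, rank-at-most-two calculation via Uhlmann's theorem. First, I would invoke Uhlmann's theorem to pick purifications $\ket{\psi_0}_{AC}$ and $\ket{\psi_1}_{AC}$ of $\rho_A^0$ and $\rho_A^1$ on an enlarged system $AC$ that achieve $|\braket{\psi_0|\psi_1}_{AC}| = \F{\rho_A^0}{\rho_A^1}$. Writing $F$ for this fidelity and using the monotonicity of the trace norm under the partial trace over $C$ (which is a quantum channel), one obtains
\begin{equation*}
\left\|a_0\rho_A^0-a_1\rho_A^1\right\|_1 \le \left\|a_0\ketbra{\psi_0}{\psi_0}_{AC}-a_1\ketbra{\psi_1}{\psi_1}_{AC}\right\|_1.
\end{equation*}
It therefore suffices to compute the trace norm of $M := a_0\ketbra{\psi_0}{\psi_0}-a_1\ketbra{\psi_1}{\psi_1}$.

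Next, I would exploit the fact that $M$ is Hermitian with rank at most two, supported on the subspace spanned by $\ket{\psi_0}$ and $\ket{\psi_1}$. Its two (possibly zero) eigenvalues $\lambda_+,\lambda_-$ satisfy
\begin{align*}
\lambda_++\lambda_- &= \Tr M = a_0 - a_1,\\
\lambda_+^2+\lambda_-^2 &= \Tr M^2 = a_0^2 + a_1^2 - 2a_0a_1 F^2,
\end{align*}
where the second identity uses $|\braket{\psi_0|\psi_1}|^2 = F^2$. Subtracting the square of the first line from twice the second yields $\lambda_+\lambda_- = -a_0a_1(1-F^2) \le 0$, so the two eigenvalues have opposite signs (with one possibly vanishing).

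Finally, because $\lambda_+\lambda_- \le 0$, one has $\|M\|_1 = |\lambda_+|+|\lambda_-| = \sqrt{(\lambda_+-\lambda_-)^2} = \sqrt{(\lambda_++\lambda_-)^2 - 4\lambda_+\lambda_-}$. Substituting the expressions above produces $\|M\|_1 = \sqrt{(a_0+a_1)^2 - 4a_0a_1 F^2}$, which together with the monotonicity inequality proves the lemma. The only subtle step is the sign analysis that converts $|\lambda_+|+|\lambda_-|$ into a closed-form expression, but this is delivered essentially for free by the product-of-eigenvalues computation; this is the main, and mild, obstacle.
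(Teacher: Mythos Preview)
Your proposal is correct and follows essentially the same route as the paper's proof: invoke Uhlmann's theorem to obtain purifications with inner product equal to the fidelity, use monotonicity of the trace norm under partial trace, and then evaluate the trace norm of the resulting rank-two Hermitian operator. The only difference is that the paper simply states the identity $\|a_0\ketbra{\phi^0}{\phi^0}-a_1\ketbra{\phi^1}{\phi^1}\|_1=\sqrt{(a_0+a_1)^2-4a_0a_1|\braket{\phi^0|\phi^1}|^2}$ without derivation, whereas you supply the eigenvalue computation explicitly.
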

The proof is given in Appendix~\ref{apx:fuchsVanDeGraaf}.

From Proposition~\ref{holevoHelstrom},
the error probability of the discrimination algorithm is
\begin{equation}
\frac{1}{2}\left(1-\left\|p_0\rho_{BR}^0-p_1\rho_{BR}^1\right\|_1\right).
\label{eq:AHH}
\end{equation}
where $\rho_{BR}^0$ and $\rho_{BR}^1$ are defined in~\eqref{eq:rho01}.
From Lemma~\ref{fuchsVanDeGraaf},
\begin{equation}
\left\|p_0\rho_{BR}^0-p_1\rho_{BR}^1\right\|_1 
\leq\sqrt{1-4p_0p_1\F{\rho_{BR}^0}{\rho_{BR}^1}^2}. 
\label{eq:gFV}
\end{equation}
Hence, our goal is to lower bound $F(\rho_{BR}^0,\rho_{BR}^1)$.
Equivalentlly, our goal is to upper bound $\bures{\rho_{BR}^0}{\rho_{BR}^1}$.
From~\eqref{eq:gtrivialub} for $D=\mathcal{A}$, we obtain
\begin{equation*}
\bures{\rho_{BR}^0}{\rho_{BR}^1}
\le
n\max_{\rho_{AR}}
\bures{\mathcal{O}_{A\to B}^0(\rho_{AR})}{\mathcal{O}_{A\to B}^1(\rho_{AR})}.
\end{equation*}
Since
\begin{align*}
&\max_{\rho_{AR}} \bures{\mathcal{O}_{A\to B}^0(\rho_{AR})}{\mathcal{O}_{A\to B}^1(\rho_{AR})}\\
&=
\arccos \min_{\rho_{AR}} F\left(\mathcal{O}_{A\to B}^0(\rho_{AR}), \mathcal{O}_{A\to B}^1(\rho_{AR})\right)
=\tau_\mathcal{A},
\end{align*}
we obtain
\begin{equation}
F(\rho_{BR}^0,\rho_{BR}^1)= \cos \bures{\rho_{BR}^0}{\rho_{BR}^1}
\ge \cos n\tau_\mathcal{A},
\label{eq:FA}
\end{equation}
if $n\tau_\mathcal{A}\le \pi/2$.
Hence, we obtain~\eqref{eq:A2} from~\eqref{eq:AHH},~\eqref{eq:gFV} and~\eqref{eq:FA}.

Kawachi et al.\ characterized the exact minimum error probability
when the two quantum channels are both unitary channels, and 
the distribution of the oracle is uniform (i.e., $p_0=p_1=0.5$)
in~\cite{kawachi2019quantum}. 
Their lower bound can be reproduced from Theorem~\ref{boundForQCDPWithBuresAngle}. 
The minimum error probability is represented by using the \textit{covering angle}.
\begin{definition}[Covering angle]
  Let $\arg_{\geq 0}(z)\coloneqq\min\{\varphi\geq 0\mid z=|z|\exp(i\varphi)\}$
  for a complex number $z$.
  Let $\theta_1,\theta_2,\dotsc,\theta_n$ be real numbers and
  $\Theta\coloneqq\{\exp(i\theta_1),\exp(i\theta_2),\dotsc,\exp(i\theta_n)\}$.
  The covering angle $\cov$ of $\Theta$ is defined as
  \begin{align*}
    \cov&\coloneqq\min_{j\in\{1,2,\dotsc,n\}}\max_{\ell\in\{1,2,\dotsc,n\}}
    \arg_{\geq 0}\Bigl(\exp\bm(i(\theta_\ell-\theta_j)\bm)\Bigr).
  \end{align*}
\end{definition}

From Theorem~\ref{boundForQCDPWithBuresAngle} and an algorithm in~\cite{kawachi2019quantum},
we obtain the following corollary.
\begin{corollary}\label{boundForUODP}
  Let $O_A^0$ and $O_A^1$ be unitary operators.
  Let $\mathcal{O}_A^0,\mathcal{O}_A^1$ be quantum channels defined as
  $\mathcal{O}_A^\xi(\rho_A)\coloneqq O_A^\xi\rho_AO_A^{\xi\dagger}$ for $\xi\in\{0,1\}$. 
  Let $\perr(n)$ be the minimum error probability 
  for $\qcdp{p_0,p_1}{\mathcal{O}_A^0,\mathcal{O}_A^1}$ with $n$ queries.
  Let $\cov$ be the covering angle of the set of eigenvalues
  of an operator $O_A^{0\dagger}O_A^1$.
  The following then holds:
  \begin{align*}
    \perr(n)=\dfrac{1}{2}\left(1-\sqrt{1-4p_0p_1\cos^2\dfrac{n\cov}{2}}\right),
  \end{align*}
  under the condition $n\cov\le\pi$, and $\perr(n)=0$ if $n\cov>\pi$.
\end{corollary}
The lower bound is obtained from~\eqref{eq:A2} and the equality $\tau_\A=\cov/2$ 
(see~\cite{kawachi2019quantum} for the details).
The upper bound is obtained from Lemma~\ref{fuchsVanDeGraaf} 
and the explicit non-adaptive algorithm in~\cite{kawachi2019quantum} as explained below.
Let $\ket{\psi_0}_A$ and $\ket{\psi_1}_A$ be eigenvectors of the unitary matrix $O_A^{0\dagger}O_A^1$ determining the covering angle
$\theta_\mathrm{cover}$ i.e.,
\begin{align*}
O_A^{0\dagger}O_A^1\ket{\psi_0}_A&=\exp(i\theta_0)\ket{\psi_0}_A\\
O_A^{0\dagger}O_A^1\ket{\psi_1}_A&=\exp(i\theta_1)\ket{\psi_1}_A\\
\theta_\mathrm{cover}&=\arg_{\ge0}(\exp(i(\theta_0-\theta_1))).
\end{align*}
In the optimal non-adaptive algorithm in~\cite{kawachi2019quantum},
first prepare the initial state $\bigl(\ket{\psi_0}_A^{\otimes n}+\ket{\psi_1}_A^{\otimes n}\bigr)/\sqrt{2}$,
 and then apply $n$ oracles $\mathcal{O}^{\otimes n}$ in parallel.
Then, the quantum states just before measurement are
\begin{align*}
&O_A^{0\otimes n}\frac{\ket{\psi_0}_A^{\otimes n}+\ket{\psi_1}_A^{\otimes n}}{\sqrt{2}}\qquad\text{and}\\
&O_A^{1\otimes n}\frac{\ket{\psi_0}_A^{\otimes n}+\ket{\psi_1}_A^{\otimes n}}{\sqrt{2}}
\end{align*}
The fidelity (the absolute value of the inner product) of the state vectors is
$|\exp(in\theta_0) + \exp(in\theta_1)|/2 = \cos(n\theta_\mathrm{cover}/2)$ if $n\theta_\mathrm{cover}\le\pi$.
Hence, this non-adaptive algorithm achieves the minimum error probability $\perr(n)$ in Corollary~\ref{boundForUODP}.
This non-adaptive algorithm requires a working system $R$ with $\dim(R)=\dim(A)^{n-1}$.
There also exists an optimal adaptive algorithm that does not require ancillas, i.e., $\dim(R)=0$.
In the algorithm, first prepare the initial state $\bigl(\ket{\psi_0}_A+\ket{\psi_1}_A\bigr)/\sqrt{2}$,
and then apply the oracle $n$ times in serial.
This algorithm achieves the minimum error probability as well.

Although $\tau_\A$ is represented using the minimum of 
the concave function, when $\dim(R)\geq\dim(A)$,
$\tau_\A$ can be represented by the minimum of the convex function~\eqref{eq:ft}.
This is proved in Appendix~\ref{apx:ft}.

\subsection{Proof of Theorem~\ref{boundForQCDPWithWeightedTraceDistance}}
Theorem~\ref{boundForQCDPWithWeightedTraceDistance} is obtained similarly to~\eqref{eq:trivialub}, but using weights for quantum states.
We first show the idea of the technique briefly.
Let us consider a distance measure induced from some norm $\|\cdot\|$.
It holds a triangle inequality $\|u-v\|\le\|u-x\|+\|x-v\|$.
Here, we may obtain a better upper bound $\|u-v\|\le\|u-r x\|+\|r x-v\|$ where $r\in(0,1]$.
We can intuitively understand it for the case of the Euclidean norm as in Fig.~\ref{fig:tri}.

\begin{figure}
  \includegraphics{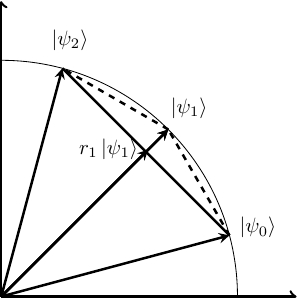}
  \caption{Triangle inequality for the Euclidean distance.}
  \label{fig:tri}
\end{figure}  
\begin{figure}
  \includegraphics{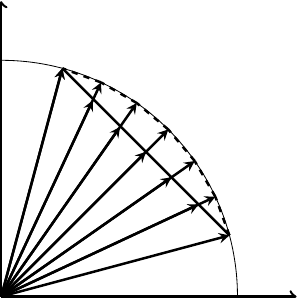}
  \caption{The shortcut method.}
  \label{fig:tris}
\end{figure}
Because there are $n-1$ internal quantum states $\rho_{BR}^{(1)},\dotsc,\rho_{BR}^{(n-1)}$ in our proof,
we have to determine $n-1$ ``weights'' $r_{1},\dotsc,r_{n-1}$, as in Fig.~\ref{fig:tris}.
Because the optimization of $n-1$ parameters is computationally 
hard if $n$ is large, we only use three parameters $\alpha_0,\,\alpha_1\in(0,1]$,
and $k\in\{1,2,\dotsc,n\}$ such that $\alpha_0^{k}=\alpha_1^{n-k}$,
and assume that $r_{n-i}=\alpha_0^i$ if $i\le k$ and $r_{i}=\alpha_1^i$ if $i\le n-k$.
We then need to optimize only two parameters $\alpha_0$ and $k$.
Once the parameters are fixed, we can calculate the lower bound in a constant time with respect to the number $n$ of queries.

In the same way as for the proof of Theorem~\ref{boundForQCDPWithBuresAngle},
we obtain the following inequality:
\begin{align*}
  &\frac12\left\|p_0\rho_{BR}^{(n)}-p_1\rho_{BR}^{(0)}\right\|_1\\
  &\leq\sum_{j=0}^{k-1}\frac12\left\|
    p_0\alpha_0^j\rho_{BR}^{(n-j)}-p_0\alpha_0^{j+1}\rho_{BR}^{(n-j-1)}
  \right\|_1\\
  &\quad+\sum_{j=0}^{n-k-1}\frac12\left\|
    p_1\alpha_1^{j+1}\rho_{BR}^{(j+1)}-p_1\alpha_1^j\rho_{BR}^{(j)}
  \right\|_1.
\end{align*}
We used the condition $p_0\alpha_0^k=p_1\alpha_1^{n-k}$ in the above inequality.
For the first term of the right-hand side, we obtain
\begin{align*}
  &\sum_{j=0}^{k-1}\frac12\left\|
    p_0\alpha_0^j\rho_{BR}^{(n-j)}-p_0\alpha_0^{j+1}\rho_{BR}^{(n-j-1)}
  \right\|_1\\
  &\leq p_0\sum_{j=0}^{k-1}\frac{\alpha_0^j}2\Bigl\|
    \mathcal{O}_{A\to B}^0\circ\Phi_{BR\to AR}^{j+1}\circ\dotsm\circ\Phi_{BR\to AR}^1\left(\ketbra{0}{0}_{BR}\right)\\*
  &\hspace{2em}
    -\alpha_0\mathcal{O}_{A\to B}^1\circ\Phi_{BR\to AR}^{j+1}\circ\dotsm\circ\Phi_{BR\to AR}^1\left(\ketbra{0}{0}_{BR}\right)
  \Bigr\|_1\\
  &\leq p_0\left(\sum_{j=0}^{k-1}\alpha_0^j\right)\max_{\rho_{AR}\in\mathsf{D}(AR)}\frac{1}{2}\Bigl\|
    \left(\mathcal{O}_{A\to B}^0-\alpha_0\mathcal{O}_{A\to B}^1\right)(\rho_{AR})
  \Bigr\|_1\\
  &=p_0\left(\sum_{j=0}^{k-1}\alpha_0^j\right)\tau_\diamond^0.
\end{align*}
The first inequality follows from the monotonicity of the trace norm.
The equality follows from the convexity of the trace distance.

Similarly, we obtain
\begin{align*}
  &\sum_{j=0}^{n-k-1}\frac12\left\|
    p_1\alpha_1^{j+1}\rho_{BR}^{(j+1)}-p_1\alpha_1^j\rho_{BR}^{(j)}
  \right\|_1\\
  &\leq p_1\left(\sum_{j=0}^{n-k-1}\alpha_1^j\right)\tau_\diamond^1.
\end{align*}
Therefore, we obtain
\begin{align*}
  &\perr(n)\\
  &=\frac{1}{2}
    -\frac{1}{2}\left\|p_0\rho_{BR}^{0,n}-p_1\rho_{BR}^{n,n}\right\|_1\\
  &\ge\frac{1}{2}
    -p_0\left(\sum_{j=0}^{k-1}\alpha_0^j\right)\tau_\diamond^0
    -p_1\left(\sum_{j=0}^{n-k-1}\alpha_1^j\right)\tau_\diamond^1.
\end{align*}
This proves Theorem~\ref{boundForQCDPWithWeightedTraceDistance}.

\subsection{Proof of Theorem~\ref{boundForQCGDPWithBuresAngle}}
Theorems~\ref{boundForQCGDPWithBuresAngle} and~\ref{boundForQCGDPWithWeightedTraceDistance} are also proved using the triangle inequalities for the Bures angle and
the trace distance with weights, respectively.
Let $((\Phi_{BR\to AR}^j),(M_{BR}^\eta))$
be a discrimination algorithm with $n$ queries that achieves the minimum error probability 
$\perr(n)$ of $\qcgdp{p_\xi}{\mathcal{O}_{A\to B}^\xi}{C_\eta}$.
Let $W$ be a two-dimensional quantum system. 
Quantum channels $(\M_{BR\to W}^\xi)_{\xi\in\Xi}$ are defined as
\begin{align*}
  \M_{BR\to W}^\xi(\rho_{BR})
  &\coloneqq\Tr\left(\sum_{\eta\colon\xi\notin C_\eta}M_{BR}^\eta\rho_{BR}\right)\Ketbra{0}{0}_W\\
  &\quad +\Tr\left(\sum_{\eta\colon\xi\in C_\eta}M_{BR}^\eta\rho_{BR}\right)\Ketbra{1}{1}_W,
\end{align*}
for each $\xi\in\Xi$. 
The quantum channel $\M_{BR\to W}^\xi$ represents the following procedure:
One performs POVM $(M_{BR}^\eta)_{\eta\in H}$ to the input state $\rho_{BR}$,
and then outputs $\ketbra{0}{0}_W$ if the measurement outcome 
$\eta\in H$ satisfies $\xi\notin C_\eta$, 
and outputs $\ketbra{1}{1}_W$ otherwise.

Let $\Psi_{AR\to BR}$ be an arbitrary quantum channel.
We define density operators as follows:
\begin{align*}
\rho_{BR}^{(j,\xi)}&\coloneqq \mathcal{O}_{A\to B}^\xi\circ\Phi_{BR\to AR}^n\circ\dotsm \circ \mathcal{O}_{A\to B}^\xi \circ\Phi_{BR\to AR}^{j+1}\\
&\circ\Psi_{AR\to BR}\circ\Phi_{BR\to AR}^j\circ\dotsm\\
&\circ\Psi_{AR\to BR}\circ\Phi_{BR\to AR}^1\left(\ketbra{0}{0}_{BR}\right)\\
\sigma_W^{(t)} &\coloneqq 
\sum_{\xi\in\Xi}p_\xi\M_{BR\to W}^\xi(\rho_{BR}^{(n-t,\xi)})
\hspace{2.7em}(0\leq i\leq n).
\end{align*}
The density operator $\sigma_W^{(t)}$ is obtained by the following procedure:
First, one picks $\xi$ according to the probability distribution $(p_\xi)_\xi$.
Then, one performs the discrimination algorithm in which the first $n-t$ queries are given to $\Psi_{AR\to BR}$ in place of $\mathcal{O}_{A\to B}^\xi$.
Finally, one performs the measurement $(M_{BR}^\eta)_\eta$ and outputs $\ketbra{0}{0}_W$ or $\ketbra{1}{1}_W$ if $\xi\notin C_\eta$ or $\xi\in C_\eta$, respectively.
Hence, the weights of $\ketbra{0}{0}_W$ and $\ketbra{1}{1}_W$ in $\sigma_W^{(n)}$ are the probability of failure and success of the modified discrimination algorithm, respectively.

By applying the triangle inequality for the Bures angle to 
a sequence of density operators
$\ketbra{1}{1}_W,\,\sigma_W^{(n)},\,\sigma_W^{(n-1)},\,\dotsc,\,\sigma_W^{(m)},\,\ketbra{0}{0}_W$,
we obtain the following inequality:
\begin{align*}
  \frac{\pi}{2}
  &=\bures{\Ketbra{1}{1}_W}{\Ketbra{0}{0}_W}\\
  &\leq\bures{\Ketbra{1}{1}_W}{\sigma_W^{(n)}}
  +\sum_{t=m+1}^n\bures{\sigma_W^{(t)}}{\sigma_W^{(t-1)}}\\*
  &\quad+\bures{\sigma_W^{(m)}}{\Ketbra{0}{0}_W}.
\end{align*}
Each term in the upper bound is evaluated 
as follows. For the first term of the upper bound, we obtain
\begin{align*}
  &\bures{\Ketbra{1}{1}_W}{\sigma_W^{(n)}}\\
  &=\arccos\sqrt{\Braket{1|\sigma_W^{(n)}|1}_W}\\
  &=\arccos\sqrt{
    \sum_{\xi\in\Xi}p_\xi\sum_{\eta\colon\xi\in C_\eta}
    \Tr\left(M^\eta_{BR}\rho_{BR}^{(0,\xi)}\right)
  }\\
  &=\arccos\sqrt{1-\perr(n)}.
\end{align*}
For the second term of the upper bound, we obtain
\begin{align*}
  &\bures{\sigma_W^{(t)}}{\sigma_W^{(t-1)}}\\
  &=\arccos F\Biggl(\sum_{\xi\in\Xi}p_\xi\M_{BR\to W}^\xi\left(\rho_{BR}^{(n-t,\xi)}\right),\\*
  &\hspace{6em}
  \sum_{\xi\in\Xi}p_\xi\M_{BR\to W}^\xi\left(\rho_{BR}^{(n-t+1,\xi)}\right)\Biggr)\\
  &\leq\arccos\Biggl(\sum_{\xi\in\Xi}p_\xi
  F\Bigl(\M_{BR\to W}^\xi\left(\rho_{BR}^{(n-t,\xi)}\right),\\*
  &\hspace{6em}
  \M_{BR\to W}^\xi\left(\rho_{BR}^{(n-t+1,\xi)}\right)\Bigr)\Biggr)\\
  &\leq\arccos\Biggl(\sum_{\xi\in\Xi}p_\xi
  F\Bigl(\mathcal{O}_{A\to B}^\xi\circ\Phi_{BR\to AR}^{n-t+1}\circ\Psi_{AR\to BR}\\
  &\quad\circ\Phi_{BR\to AR}^{n-t}\circ\dotsm \circ\Psi_{AR\to BR}\circ\Phi_{BR\to AR}^1\left(\ketbra{0}{0}_{BR}\right),\\*
  &\quad\Psi_{AR\to BR}\circ\Phi_{BR\to AR}^{n-t+1}\circ\dotsm\circ\Psi_{AR\to BR}\\*
  &\quad\circ\Phi_{BR\to AR}^1\left(\ketbra{0}{0}_{BR}\right)\Bigr)\Biggr)\\
  &\leq\arccos\min_{\rho_{AR}\in\mathsf{D}(AR)}\sum_{\xi\in\Xi}p_\xi
  F\Bigl(\mathcal{O}_{A\to B}^\xi(\rho_{AR}),\\*
  &\hspace{6em}\Psi_{AR\to BR}(\rho_{AR})\Bigr)\\
  &=\theta_\A(\Psi_{AR\to BR}).
\end{align*}
The first inequality and the last equality follow from 
the joint concavity of the fidelity.
The second inequality follows from the monotonicity of the fidelity.
In the third inequality, we replace the quantum state
\begin{equation*}
  \Phi_{BR\to AR}^{n-t+1}\circ\dotsm\circ\Psi_{AR\to BR}\circ\Phi_{BR\to AR}^1\left(\ketbra{0}{0}_{BR}\right)
\end{equation*}
with a general quantum state $\rho_{AR}$.

For the last term, we obtain
\begin{align*}
  &\bures{\sigma_W^{(m)}}{\Ketbra{0}{0}_W}\\
  &=\arccos\sqrt{\Braket{0|\sigma_W^{(m)}|0}_W}\\
  &=\arccos\sqrt{
    \sum_{\xi\in\Xi}p_\xi\sum_{\eta\colon\xi\notin C_\eta}
    \Tr\left(M^\eta_{BR}\rho_{BR}^{(n-m,\xi)}\right)
  }\\
  &\leq\arccos\sqrt{\perr(m)}.
\end{align*}
From the above inequalities, we obtain
\begin{align*}
  \frac{\pi}{2}
  &\leq\arccos\sqrt{1-\perr(n)}\\*
  &\quad+(n-m)\theta_\A(\Psi_{AR\to BR})+\arccos\sqrt{\perr(m)}.
\end{align*}
Therefore, we obtain
\begin{align*}
  &1-\perr(n)\\
  &\leq\cos^2\left(
    \frac{\pi}{2}-(n-m)\theta_\A(\Psi_{AR\to BR})-\arccos\sqrt{\perr(m)}
  \right)\\
  &=\sin^2\bm((n-m)\theta_\A(\Psi_{AR\to BR})+\theta_m\bm),
\end{align*}
for any $\Psi_{AR\to BR}$ and $m$ satisfying 
$(n-m)\theta_\A(\Psi_{AR\to BR})+\theta_m\in[0,\pi/2]$.
Hence, we obtain the main part of Theorem~\ref{boundForQCGDPWithBuresAngle}.

Although $\theta_\A$ is represented using the minimum of 
the concave function, when $\dim(R)\geq\dim(A)$ and $\Psi_{AR\to BR}=\Psi_{A\to B}\otimes \mathrm{id}_R$ for some quantum channel $\Psi_{A\to B}$,
$\theta_\A$ can be represented by the minimum of a convex function~\eqref{eq:ft}.
\begin{lemma}\label{minimumFidelityFormula}
  Let $\mathcal{O}_{A\to B}^\xi$ and $\Psi_{A\to B}$ be quantum channels with 
  Stinespring representations $O_{A\to BE}^\xi$ and $V_{A\to BE}$, respectively.
  If $\dim(R)\geq\dim(A)$, the following holds:
  \begin{align*}
    &\min_{\ket{\phi}_{AR}\in\mathsf{S}(AR)}\sum_{\xi\in\Xi}p_\xi
      \F{\mathcal{O}_{A\to B}^\xi(\ketbra{\phi}{\phi}_{AR})}{\Psi_{A\to B}(\ketbra{\phi}{\phi}_{AR})}\\
    &=\min_{\sigma_A\in\mathsf{D}(A)}\sum_{\xi\in\Xi}p_\xi\left\|
      \Tr_B\left(O_{A\to BE}^\xi\sigma_{A}(V_{A\to BE})^\dagger\right)
    \right\|_1.
  \end{align*}
\end{lemma}
The proof is the same as the proof of Lemma~\ref{lem:ft} in Appendix~\ref{apx:ft}.

\subsection{Proof of Theorem~\ref{boundForQCGDPWithWeightedTraceDistance}}
Theorem~\ref{boundForQCGDPWithWeightedTraceDistance} is proved similarly to Theorem~\ref{boundForQCGDPWithBuresAngle}.
The quantum states $\phi_{BR}^{(j,\xi)}$ and $\sigma_W^{(t)}$ are defined in the same way.
We obtain the following inequality:
\begin{align*}
  1
  &=\frac{1}{2}\left\|\Ketbra{1}{1}_W-\Ketbra{0}{0}_W\right\|_1\\
  &\leq\frac{1}{2}\left\|\Ketbra{1}{1}_W-\sigma_W^{(n)}\right\|_1\\
  &\quad+\sum_{j=0}^{n-k-1}\frac{1}{2}\left\|
    \alpha_0^j\sigma_W^{(n-j)}-\alpha_0^{j+1}\sigma_W^{(n-j-1)}
  \right\|_1\\
  &\quad+\sum_{j=0}^{k-m-1}\frac{1}{2}\left\|
    \alpha_1^{j+1}\sigma_W^{(m+j+1)}-\alpha_1^j\sigma_W^{(m+j)}
  \right\|_1\\
  &\quad+\frac{1}{2}\left\|\sigma_W^{(m)}-\Ketbra{0}{0}_W\right\|_1.
\end{align*}
We used the condition $\alpha_0^{n-k}=\alpha_1^{k-m}$ in the above inequality.
Each term on the upper bound is evaluated 
as follows. For the first term, we obtain
\begin{align*}
  &\frac{1}{2}\left\|\Ketbra{1}{1}_W-\sigma_W^{(n)}\right\|_1\\
  &=\sum_{\xi\in\Xi}p_\xi\sum_{\eta\colon\xi\notin C_\eta}
    \Tr\left(M^\eta_{BR}\rho_{BR}^{(0,\xi)}\right)\\
  &=\perr(n).
\end{align*}

For the second term, we obtain
\begin{align*}
  &\sum_{j=0}^{n-k-1}\frac{1}{2}\left\|
    \alpha_0^j\sigma_W^{(n-j)}-\alpha_0^{j+1}\sigma_W^{(n-j-1)}
  \right\|_1\\
  &=\sum_{j=0}^{n-k-1}\frac{\alpha_0^j}{2}\,\Biggl\|
    \sum_{\xi\in\Xi}p_\xi\M_{BR\to W}^\xi\left(\rho_{BR}^{(j,\xi)}\right)\\*
  &\hspace{7em}
    -\alpha_0\sum_{\xi\in\Xi}p_\xi\M_{BR\to W}^\xi\left(\rho_{BR}^{(j+1,\xi)}\right)
  \Biggr\|_1\\
  &\leq\sum_{j=0}^{n-k-1}\frac{\alpha_0^j}{2}\sum_{\xi\in\Xi}p_\xi\,\Biggl\|
    \M_{BR\to W}^\xi\left(\rho_{BR}^{(j,\xi)}\right)\\*
  &\hspace{7em}
    -\alpha_0\M_{BR\to W}^\xi\left(\rho_{BR}^{(j+1,\xi)}\right)
  \Biggr\|_1\\
  &\leq\sum_{j=0}^{n-k-1}\frac{\alpha_0^j}{2}\sum_{\xi\in\Xi}p_\xi\,\Biggl\|
    \mathcal{O}_{A\to B}^\xi\circ\Phi_{BR\to AR}^{j+1}\\
  &\qquad\circ\Psi_{AR\to BR}\circ\Phi_{BR\to AR}^j\circ\dotsm\circ\Psi_{AR\to BR}\\*
  &\qquad\circ\Phi_{BR\to AR}^1\left(\ketbra{0}{0}_{BR}\right)
    -\alpha_0\Psi_{AR\to BR}\circ\Phi_{BR\to AR}^{j+1}\\
  &\qquad\circ\Psi_{AR\to BR}\circ\Phi_{BR\to AR}^j\circ\dotsm\circ\Psi_{AR\to BR}\\*
  &\qquad\circ\Phi_{BR\to AR}^1\left(\ketbra{0}{0}_{BR}\right) \Biggr\|_1\\
  &\leq\left(\sum_{i=0}^{n-k-1}\alpha_0^i\right)\max_{\rho_{AR}\in\mathsf{D}(AR)}\sum_{\xi\in\Xi}p_\xi\\*
  &\hspace{7em}\frac{1}{2}\left\|
    \left(\mathcal{O}_{A\to B}^\xi-\alpha_0\Psi_{AR\to BR}\right)(\rho_{AR})
  \right\|_1\\
  &=\left(\sum_{i=0}^{n-k-1}\alpha_0^i\right)\max_{\Ket{\phi}_{AR}\in\mathsf{S}(AR)}\sum_{\xi\in\Xi}p_\xi\\*
  &\hspace{7em}\frac{1}{2}\left\|
    \left(\mathcal{O}_{A\to B}^\xi-\alpha_0\Psi_{AR\to BR}\right)(\ketbra{\phi}{\phi}_{AR})
  \right\|_1\\
  &=\left(\sum_{i=0}^{n-k-1}\alpha_0^i\right)\theta_\diamond^0(\Psi_{AR\to BR}).
\end{align*}
The first inequality and the second equality follows from 
the convexity of the trace distance.
The second inequality follows from the monotonicity of the trace norm.

Similarly, we obtain
\begin{align*}
  &\sum_{j=0}^{k-m-1}\frac{1}{2}\left\|
    \alpha_1^{j+1}\sigma_W^{m+j+1}-\alpha_1^j\sigma_W^{m+j}
  \right\|_1\\
  &\leq\left(\sum_{j=0}^{k-m-1}\alpha_1^j\right)\theta_\diamond^1(\Psi_{AR\to BR}).
\end{align*}

For the last term, we obtain
\begin{align*}
  &\frac{1}{2}\left\|\sigma_W^m-\Ketbra{0}{0}_W\right\|_1\\
  &=\sum_{\xi\in\Xi}p_\xi\sum_{\eta\colon\xi\in C_\eta}
    \Tr\left(M^\eta_{BR}\rho_{BR}^{n-m,n,\xi}\right)\\
  &\leq 1-\perr(m).
\end{align*}

From the above inequalities, we obtain
\begin{align*}
  1
  &\leq\perr(n)+\left(\sum_{i=0}^{n-k-1}\alpha_0^i\right)\theta_\diamond^0(\Psi_{AR\to BR})\\
  &+\left(\sum_{i=0}^{k-m-1}\alpha_1^i\right)\theta_\diamond^1(\Psi_{AR\to BR})+1-\perr(m).
\end{align*}
By replacing last $n-k$ applications of $\Psi_{AR\to BR}$ with $\Psi_{AR\to BR}^0$ and first $k$ applications of $\Psi_{AR\to BR}$ with $\Psi_{AR\to BR}^1$,
we obtain Theorem~\ref{boundForQCGDPWithWeightedTraceDistance}.

\section{Optimization by semidefinite programming}\label{sec:sdp}
In this section, we assume $\dim(R)\geq\dim(A)$, and show that
all optimization problems in Theorems~\ref{boundForQCDPWithBuresAngle},
\ref{boundForQCDPWithWeightedTraceDistance},~\ref{boundForQCGDPWithBuresAngle},
and~\ref{boundForQCGDPWithWeightedTraceDistance} can be formulated as SDPs, 
so that the evaluation will be efficient.
Let $J(\Phi_{A\to B})\in\mathsf{L}(BA)$ be the Choi representations of 
a quantum channel $\Phi_{A\to B}$, defined by
\begin{equation*}
  J(\Phi_{A\to B})
  \coloneqq\sum_{i,j}\Phi_{A\to B}(\ketbra{i}{j}_A)\otimes\ketbra{i}{j}_A.
\end{equation*}
The minimization of the fidelity
$F\bigl(\mathcal{O}_{A\to B}^0(\ketbra{\phi}{\phi}_{AR}),\allowbreak\mathcal{O}_{A\to B}^1(\ketbra{\phi}{\phi}_{AR})\bigr)$
in Theorem~\ref{boundForQCDPWithBuresAngle}
can be formalized as the following SDP~\cite{katariya2021geometric}:
\begin{align*}
  \max_{\lambda,X_{BA}}\colon& \lambda\\
  \text{subject to}\colon&
  \begin{bmatrix}
    J(\mathcal{O}_{A\to B}^0) & X_{BA} \\
    X_{BA}^\dagger & J(\mathcal{O}_{A\to B}^1)
  \end{bmatrix}
  \succeq 0\\
  &\frac12\left(\Tr_B(X_{BA})+\Tr_B(X_{BA}^\dagger)\right)\succeq\lambda I_A.
\end{align*}
To make the present paper self-contained, a simple proof is shown in Appendix~\ref{apx:sdpf}.
Similarly, for Theorem~\ref{boundForQCGDPWithBuresAngle},
\begin{align*}
  &\max_{\Psi_{A\to B}}\theta_{\mathcal{A}}(\Psi_{A\to B}\otimes\mathrm{id}_R)=\max_{\Psi_{A\to B}}\min_{\ket{\phi}_{AR}\in\mathsf{S}(AR)}\sum_{\xi\in\Xi}p_\xi\\
  &\hspace{3em}\F{\mathcal{O}_{A\to B}^\xi(\ketbra{\phi}{\phi}_{AR})}{\Psi_{A\to B}(\ketbra{\phi}{\phi}_{AR})}
\end{align*}
can be formalized as the following SDP:
\begin{align*}
  \max_{\lambda,X_{BA}^\xi,P_{BA}}\colon& \lambda\\
  \text{subject to}\colon&
  \begin{bmatrix}
    J(\mathcal{O}_{A\to B}^\xi) & X_{BA}^\xi \\
    X_{BA}^{\xi\dagger} & P_{BA}
  \end{bmatrix}
  \succeq 0 \quad (\forall\xi\in\Xi)\\
  &\frac12\sum_{xi\in\Xi}p_\xi\left(\Tr_B(X_{BA}^\xi)+\Tr_B(X_{BA}^{\xi\dagger})\right)\succeq\lambda I_A\\
  &\Tr_B(P_{BA})=I_A.
\end{align*}
Here, $P_{BA}$ corresponds to the Choi operator $J(\Psi_{A\to B})$.

For Theorems~\ref{boundForQCDPWithWeightedTraceDistance} 
and~\ref{boundForQCGDPWithWeightedTraceDistance}, 
we use Watrous's SDP using the Choi operator~\cite{watrous2013simpler}.
Indeed, $\|\mathcal{O}_{A\to B}^0-\alpha\mathcal{O}_{A\to B}^1\|_\diamond$
in Theorem~\ref{boundForQCDPWithWeightedTraceDistance} is a solution 
of the following SDP~\cite{watrous2013simpler}:
\begin{align*}
  \min_{\lambda,Y_{BA},Z_{BA}}\colon& \lambda\\
  \text{subject to}\colon&
  J(\mathcal{O}_{A\to B}^0)-\alpha J(\mathcal{O}_{A\to B}^1)=Y_{BA}-Z_{BA},\\
  &\lambda I_A\succeq\Tr_B(Y_{BA})+\Tr_B(Z_{BA}).
\end{align*}
The proof is shown in Appendix~\ref{apx:sdpt}.
For Theorem~\ref{boundForQCGDPWithWeightedTraceDistance},
\begin{align*}
  &\min_{\Psi_{A\to B}^0}\theta_\diamond^0(\Psi_{A\to B}^0\otimes\mathrm{id}_R)=\min_{\Psi_{A\to B}^0}\max_{\ket{\phi}_{AR}}\sum_{\xi\in\Xi}p_\xi\\
  &\hspace{3em}\frac{1}{2}\left\|
    \left(\mathcal{O}_{A\to B}^\xi-\alpha_0\Psi_{A\to B}^0\right)(\ketbra{\phi}{\phi}_{AR})
  \right\|_1
\end{align*}
can be represented as a solution of SDP\@.
Similarly to the above case, $\min_{\Psi_{A\to B}^0}\theta_\diamond^0(\Psi_{A\to B}^0\otimes\mathrm{id}_R)$
is a solution of the following SDP:
  
\begin{align*}
  \min_{\lambda,Y_{BA}^\xi,Z_{BA}^\xi,P_{BA}}\colon& \lambda\\
  \text{subject to}\colon&
  J(\mathcal{O}_{A\to B}^\xi)-\alpha_0 P_{BA} = Y_{BA}^\xi - Z_{BA}^\xi\\
  & \hspace{14em} (\forall\xi\in\Xi)\\
  &\lambda I_A\succeq \frac12\sum_{\xi\in\Xi}p_\xi\left(\Tr_B(Y_{BA}^\xi)+\Tr_B(Z_{BA}^\xi)\right)\\
  &P_{BA} \succeq 0,\quad \Tr_B(P_{BA})=I_A.
\end{align*}

Hence, all lower bounds obtained in this paper can be evaluated and optimized on the condition $\Psi_{AR\to BR}=\Psi_{A\to B}\otimes\mathrm{id}_R$ efficiently.

\section{Concluding remarks}\label{sec:concl}
In this study, we derive lower bounds on the error probability of quantum channel group discrimination problem on the basis of the triangle inequalities for the 
Bures angle and the trace distance.
It would be interesting to investigate if other distance measures exist that yield tighter and efficiently computable lower bounds.

The function $\cos\theta\mapsto\cos n\theta$ is known as the Chebyshev polynomial.
The Chebyshev polynomial has some extremal properties.
The polynomial method for lower bounding the query complexity is relevant to the Chebyshev polynomial~\cite{beals2001quantum,paturi1992degree}.
Hence, it is interesting to study the connection between the lower bounds in Theorems~\ref{boundForQCDPWithBuresAngle} and~\ref{boundForQCGDPWithBuresAngle} using the Bures angle
and the polynomial method.

Our technique can be used together with the technique based on the port-based teleportation~\cite{pirandola2019fundamental}.
The combination of two techniques may improve the lower bound.

\begin{acknowledgments}
This work was supported by JST PRESTO Grant Number JPMJPR1867, JST FOREST Grant Number JPMJFR216V.
and JSPS KAKENHI Grant Numbers JP17K17711, JP18H04090, JP20H04138, and JP20H05966.
We would like to thank Editage (\url{www.editage.com}) for English language editing.
\end{acknowledgments}

\appendix
\section{Triangle inequalities for the Bures distance and the sine distance}\label{apx:triangle}
In this appendix, we show the triangle inequalities for the Bures distance and the sine distance from the triangle inequality for the Bures angle.
For arbitrary quantum states $\rho_A,\,\sigma_A,\,$ and $\tau_A\in\mathsf{D}(A)$, the Bures angle satisfies the triangle inequality
\begin{equation*}
  \mathcal{A}(\rho_A, \sigma_A) 
  \leq\mathcal{A}(\rho_A,\tau_A)+\mathcal{A}(\tau_A, \sigma_A).
\end{equation*}
Then,
\begin{align*}
  \mathcal{B}(\rho_A, \sigma_A) 
  &=2\sin\left(\frac{\mathcal{A}(\rho_A,\sigma_A)}2\right)\\
  &\leq 2\sin\left(\frac{\mathcal{A}(\rho_A,\tau_A)+\mathcal{A}(\tau_A,\sigma_A)}{2}\right)
\end{align*}
because the sine function is monotonically increasing in $[0,\pi/2]$.
From the trigonometric addition formula,
\begin{align*}
  &2\sin\left(\frac{\mathcal{A}(\rho_A,\tau_A)+\mathcal{A}(\tau_A,\sigma_A)}{2}\right)\\
  &=2\sin\left(\frac{\mathcal{A}(\rho_A,\tau_A)}{2}\right)
  \cos\left(\frac{\mathcal{A}(\tau_A,\sigma_A)}{2}\right)\\
  &\quad+2\sin\left(\frac{\mathcal{A}(\tau_A,\sigma_A)}{2}\right)
  \cos\left(\frac{\mathcal{A}(\rho_A,\tau_A)}{2}\right)\\
  &\leq 2\sin\left(\frac{\mathcal{A}(\rho_A,\tau_A)}{2}\right)
  +2\sin\left(\frac{\mathcal{A}(\tau_A,\sigma_A)}{2}\right).
\end{align*}
This proves the triangle inequality for the Bures distance.

The triangle inequality for the sine distance
\begin{align*}
  \sin\left(\mathcal{A}(\rho_A,\sigma_A)\right)
  \leq\sin\left(\mathcal{A}(\rho_A,\tau_A)\right)
  +\sin\left(\mathcal{A}(\tau_A,\sigma_A)\right)
\end{align*}
can be obtained in the same way if
$\mathcal{A}(\rho_A,\tau_A)+\mathcal{A}(\tau_A,\sigma_A)\leq\frac{\pi}{2}$.
If $\mathcal{A}(\rho_A,\tau_A)+\mathcal{A}(\tau_A,\sigma_A)>\frac{\pi}{2}$,
the triangle inequality is obtained from
\begin{align*}
  &\sin\left(\mathcal{A}(\rho_A,\tau_A)\right)+\sin\left(\mathcal{A}(\tau_A,\sigma_A)\right)\\
  &\ge\frac{2}{\pi}\left(\mathcal{A}(\rho_A,\tau_A)+\mathcal{A}(\tau_A,\sigma_A)\right) > 1.
\end{align*}

\section{Proof of Lemma~\ref{fuchsVanDeGraaf}}\label{apx:fuchsVanDeGraaf}
The following proof is a generalization of the proof of
Fuchs--van~de~Graaf inequality in~\cite{watrous2018theory}.
From Uhlmann's theorem, there exist pure states
$\Ket{\phi^0}_{AR},\Ket{\phi^1}_{AR}\in\mathsf{S}(AR)$ satisfying the equalities
$\Tr_R\left(\Ketbra{\phi^0}{\phi^0}_{AR}\right)=\rho_A^0$,
$\Tr_R\left(\Ketbra{\phi^1}{\phi^1}_{AR}\right)=\rho_A^1$, and
$\left|\Braket{\phi^0|\phi^1}_{AR}\right|=F(\rho_A^0,\rho_A^1)$.
Therefore, we obtain
\begin{align*}
  &\left\|a_0\rho_A^0-a_1\rho_A^1\right\|_1\\
  &=\left\|
    a_0\Tr_R\left(\Ketbra{\phi^0}{\phi^0}_{AR}\right)
    -a_1\Tr_R\left(\Ketbra{\phi^1}{\phi^1}_{AR}\right)
  \right\|_1\\
  &\leq\left\|
    a_0\Ketbra{\phi^0}{\phi^0}_{AR}-a_1\Ketbra{\phi^1}{\phi^1}_{AR}
  \right\|_1\\
  &=\sqrt{(a_0+a_1)^2-4a_0a_1\left|\Braket{\phi^0|\phi^1}_{AR}\right|^2}\\
  &=\sqrt{(a_0+a_1)^2-4a_0a_1\F{\rho_A^0}{\rho_A^1}^2}.
\end{align*}
The inequality follows from the monotonicity of the trace norm.

\section{Minimum of the fidelity}\label{apx:ft}
\begin{lemma}\label{lem:ft}
  Let $\mathcal{O}_{A\to B}^0$ and $\mathcal{O}_{A\to B}^1$ be quantum channels with 
  Stinespring representations $O_{A\to BE}^0$ and $O_{A\to BE}^1$, respectively.
  If $\dim(R)\geq\dim(A)$, the following holds:
  \begin{align*}
    &\min_{\ket{\phi}_{AR}\in\mathsf{S}(AR)}
      \F{\mathcal{O}_{A\to B}^0(\ketbra{\phi}{\phi}_{AR})}{\mathcal{O}_{A\to B}^1(\ketbra{\phi}{\phi}_{AR})}\\
    &=\min_{\sigma_A\in\mathsf{D}(A)}
      \left\|\Tr_B\left(O_{A\to BE}^0\sigma_{A}O_{A\to BE}^{1\dagger}\right)\right\|_1.
  \end{align*}
\end{lemma}
\begin{proof}
  \begin{align*}
    &\F{\Tr_E\left(\Ketbra{\psi^0}{\psi^0}_{CE}\right)}{\Tr_E\left(\Ketbra{\psi^1}{\psi^1}_{CE}\right)}\\
    &=\max_{U_E}\left|\Braket{\psi^1|U_E|\psi^0}_{CE}\right|\\
    &=\max_{U_E}\left|\Tr_{CE}\left(U_E\Ketbra{\psi^0}{\psi^1}_{CE}\right)\right|\\
    &=\max_{U_E}\left|\Tr_E\left(U_E\Tr_C\left(\Ketbra{\psi^0}{\psi^1}_{CE}\right)\right)\right|\\
    &=\left\|\Tr_C\left(\Ketbra{\psi^0}{\psi^1}_{CE}\right)\right\|_1,
  \end{align*}
  where $U_E$ is a unitary operator.
  The first equality follows from Uhlmann's theorem.
  The last equality follows from the duality of the Schatten $p$-norm.

  We then obtain
  \begin{align*}
    &\F{\mathcal{O}_{A\to B}^0(\ketbra{\phi}{\phi}_{AR})}{\mathcal{O}_{A\to B}^1(\ketbra{\phi}{\phi}_{AR})}\\
    &=F\biggl(\Tr_{E}\left(O_{A\to BE}^0\ketbra{\phi}{\phi}_{AR}O_{A\to BE}^{0\dagger}\right),\\
    &\qquad \Tr_{E}\left(O_{A\to BE}^1\ketbra{\phi}{\phi}_{AR}O_{A\to BE}^{1\dagger}\right)\biggr)\\
    &=\left\|\Tr_{BR}\left(O_{A\to BE}^0\ketbra{\phi}{\phi}_{AR}O_{A\to BE}^{1\dagger}\right)\right\|_1\\
    &=\left\|\Tr_{B}\left(O_{A\to BE}^0\Tr_R\left(\ketbra{\phi}{\phi}_{AR}\right)O_{A\to BE}^{1\dagger}\right)\right\|_1.
  \end{align*}
  When $\dim(R)\geq\dim(A)$, any $\sigma_A\in\mathsf{D}(A)$ can be represented by $\Tr_R(\ketbra{\phi}{\phi}_{AR})$ for some $\ket{\phi}_{AR}\in\mathsf{S}(AR)$.
  Hence, Lemma is proved.
\end{proof}

\section{SDP representations}
\subsection{SDP for the fidelity}\label{apx:sdpf}
The fidelity of $\rho_A$ and $\sigma_A$ is represented as an solution of the following SDP~\cite{watrous2018theory}.
\begin{align*}
\min_{Y_Z,Z_A}:\,& \frac12\left(\langle\rho_A, Y_A\rangle + \langle\sigma_A, Z_A\rangle\right)\\
\text{subject to}:\,&
\begin{bmatrix}
Y_A&I_A\\
I_A&Z_A
\end{bmatrix}\succeq 0.
\end{align*}
Hence, $\min_{\ket{\psi}_{AR}}F(\Phi_{A\to B}(\ketbra{\psi}{\psi}_{AR}), \Psi_{A\to B}(\ketbra{\psi}{\psi}_{AR}))$ is the optimal value of
\begin{equation}\label{eq:df}
\begin{split}
\min_{\ket{\psi}_{AR},Y_{BR},Z_{BR}}:& \frac12\biggl(\bigl\langle\Phi_{A\to B}(\ketbra{\psi}{\psi}_{AR}), Y_{BR}\bigr\rangle\\
&\qquad + \bigl\langle\Psi_{A\to B}(\ketbra{\psi}{\psi}_{AR}), Z_{BR}\bigr\rangle\biggr)\\
\text{subject to}:&
\begin{bmatrix}
Y_{BR}&I_{BR}\\
I_{BR}&Z_{BR}
\end{bmatrix}\succeq 0
\end{split}
\end{equation}
For $\ket{\psi}_{AR}=\sum_{a,c}\psi_{a,c}\ket{a}_A\ket{c}_R$, we obtain 
\begin{align}
&\Phi_{A\to B}(\ketbra{\psi}{\psi}_{AR})=\sum_{a,b,c,d}\psi_{a,c}\psi^*_{b,d}\Phi_{A\to B}(\ketbra{a}{b}_A)\otimes\ketbra{c}{d}_R\nonumber\\
&\quad =\sum_{a,b}\Phi_{A\to B}(\ketbra{a}{b}_A)\otimes\sum_{c,d}\psi_{a,c}\psi_{b,d}^*\ketbra{c}{d}_R\nonumber\\
&\quad =\sum_{a,b}\Phi_{A\to B}(\ketbra{a}{b}_A)\otimes L_{A\to R}\ketbra{a}{b}_A L_{A\to R}^\dagger\nonumber\\
&\quad =(I_B\otimes L_{A\to R})J(\Phi_{A\to B})(I_B\otimes L_{A\to R}^\dagger)\label{eq:cp}
\end{align}
where $L_{A\to R}:=\sum_{a,c}\psi_{a,c}\ket{c}_R\bra{a}_A$.
Let
\begin{align*}
Y_{BA}' &:=(I_B\otimes L_{A\to R}^\dagger)Y_{BR}(I_B\otimes L_{A\to R})\\
Z_{BA}' &:=(I_B\otimes L_{A\to R}^\dagger)Z_{BR}(I_B\otimes L_{A\to R}).
\end{align*}
Then, we obtain
\begin{align*}
\bigl\langle\Phi_{A\to B}(\ketbra{\psi}{\psi}_{AR}), Y_{BR}\bigr\rangle &= \bigl\langle J(\Phi_{A\to B}), Y'_{BA}\bigr\rangle\\
\bigl\langle\Psi_{A\to B}(\ketbra{\psi}{\psi}_{AR}), Z_{BR}\bigr\rangle &= \bigl\langle J(\Phi_{A\to B}), Z'_{BA}\bigr\rangle
\end{align*}
and
\begin{align*}
\begin{bmatrix}
Y_{BR}&I_{BR}\\
I_{BR}&Z_{BR}
\end{bmatrix}\succeq 0
\Longrightarrow
L
\begin{bmatrix}
Y_{BR}&I_{BR}\\
I_{BR}&Z_{BR}
\end{bmatrix}L^\dagger\succeq 0
\end{align*}
where
\begin{align*}
L := 
\begin{bmatrix}
I_B\otimes L_{A\to R}^\dagger&0\\
0&I_B \otimes L_{A\to R}^\dagger
\end{bmatrix}.
\end{align*}
Since
\begin{align*}
L
\begin{bmatrix}
Y_{BR}&I_{BR}\\
I_{BR}&Z_{BR}
\end{bmatrix}L^\dagger
=
\begin{bmatrix}
Y'_{BA}&I_{B}\otimes\sigma_A\\
I_{B}\otimes\sigma_A&Z'_{BA}
\end{bmatrix}
\end{align*}
where $\sigma_A:=L_{A\to R}^\dagger L_{A\to R}=\sum_{a,b}(\sum_c\psi_{a,c}^*\psi_{b,c})\ketbra{a}{b}_A$,
the optimal value of the following SDP is at most the optimal value of~\eqref{eq:df}.
\begin{align}
\min_{Y'_{BA},\,Z'_{BA},\,\sigma_A}:& \frac12\left(\bigl\langle J(\Phi_{A\to B}), Y'_{BA}\bigr\rangle + \bigl\langle J(\Psi_{A\to B}), Z'_{BA}\bigr\rangle\right)\nonumber\\
\text{subject to}:&
\begin{bmatrix}
Y'_{BA}&I_{B}\otimes\sigma_A\\
I_{B}\otimes\sigma_A&Z'_{BA}
\end{bmatrix}\succeq 0\label{eq:sdpf}\\
&\sigma_A\succeq 0, \Tr(\sigma_A)=1.\nonumber
\end{align}

Next, we will show that the optimal value of~\eqref{eq:sdpf} is equal to the optimal value of~\eqref{eq:df} if $\dim(R)\ge\dim(A)$.
Even if we replace $\sigma_A\succeq 0$ with $\sigma_A\succ 0$, and replace $\min$ with $\inf$ in~\eqref{eq:sdpf}, the optimal value of~\eqref{eq:sdpf} is unchanged from the continuity.
Let $(Y'_{BA}, Z'_{BA}, \sigma_A)$ be an arbitrary feasible solution of~\eqref{eq:sdpf} with $\sigma_A\succ 0$.
Let $\sigma_A=\sum_{a}\lambda_a\ketbra{\varphi_a}{\varphi_a}_A$ be the spectral decomposition of $\sigma_A$.
From $\dim(R)\ge\dim(A)$, $\sigma_A=L_{A\to R}^\dagger L_{A\to R}$ for $L_{A\to R}:=\sum_{a}\sqrt{\lambda_a}\ket{a}_R\bra{\varphi_a}_A$.
Let $J_{A\to R}:=\sum_{a}\frac1{\sqrt{\lambda_a}}\ket{a}_R\bra{\varphi_a}_A$. Then, $L_{A\to R}^\dagger J_{A\to R}=I_A$.
Then, it is easy to verify that
\begin{align*}
Y_{BR}:= (I_B\otimes J_{A\to R}) Y'_{BA}(I_B\otimes J_{A\to R}^\dagger)\\
Z_{BR}:= (I_B\otimes J_{A\to R}) Z'_{BA}(I_B\otimes J_{A\to R}^\dagger)\\
\ket{\psi} := \sum_{a}\sqrt{\lambda_a}\ket{\varphi_a}_A^*\ket{a}_R
\end{align*}
is a feasible solution of~\eqref{eq:df}, and have the same objective value.
This implies that the optimal value of~\eqref{eq:sdpf} is equal to the optimal value of~\eqref{eq:df} if $\dim(R)\ge\dim(A)$.

The dual of~\eqref{eq:sdpf} is 
\begin{align*}
&\max_{\lambda,\begin{bmatrix}U_{BA}&W_{BA}\\ W_{BA}^\dagger&V_{BA}\end{bmatrix}\succeq0}\min_{Y'_{BA},\,Z'_{BA},\,\sigma_A\succeq0}\\
&\quad\frac12\left(\bigl\langle J(\Phi_{A\to B}), Y'_{BA}\bigr\rangle + \bigl\langle J(\Psi_{A\to B}), Z'_{BA}\bigr\rangle\right)\\
&\quad-\frac12\left\langle
\begin{bmatrix}
U_{BA}&W_{BA}\\
W_{BA}^\dagger&V_{BA}
\end{bmatrix},
\begin{bmatrix}
Y'_{BA}&I_{B}\otimes\sigma_A\\
I_{B}\otimes\sigma_A&Z'_{BA}
\end{bmatrix}
\right\rangle\\
&\quad -\lambda(\Tr(\sigma_A)-1)\\
&= \max\min\\
&\quad\frac12\left(\bigl\langle J(\Phi_{A\to B}), Y'_{BA}\bigr\rangle + \bigl\langle J(\Psi_{A\to B}), Z'_{BA}\bigr\rangle\right)\\
&\quad-\frac12\langle U_{BA}, Y_{BA}'\rangle - \frac12\langle W_{BA}+W_{BA}^\dagger, I_B\otimes \sigma_A\rangle\\
&\quad - \frac12\langle V_{BA}, Z_{BA}'\rangle-\lambda(\Tr(\sigma_A)-1)\\
&= \max\min\\
&\quad\frac12\Bigl(\bigl\langle J(\Phi_{A\to B}) - U_{BA}, Y'_{BA}\bigr\rangle\\
&\qquad + \bigl\langle J(\Psi_{A\to B})-V_{BA}, Z'_{BA}\bigr\rangle\Bigr)\\
&\quad-\left\langle \lambda I_A + \frac12(W_{A}+W_{A}^\dagger), \sigma_A\right\rangle +\lambda.
\end{align*}
Hence, the dual SDP of~\eqref{eq:sdpf} is
\begin{align*}
\max:\,&\lambda\\
\text{subject to}:\,&
\begin{bmatrix}
J(\Phi_{A\to B})& W_{BA}\\
W_{BA}^\dagger& J(\Psi_{A\to B})
\end{bmatrix}\succeq 0\\
&\lambda I_A + \frac12(W_A+W_A^\dagger)\preceq 0.
\end{align*}
Since
\begin{align*}
&\begin{bmatrix}
J(\Phi_{A\to B})& W_{BA}\\
W_{BA}^\dagger& J(\Psi_{A\to B})
\end{bmatrix}\succeq 0\\
&\iff
\begin{bmatrix}
J(\Phi_{A\to B})& -W_{BA}\\
-W_{BA}^\dagger& J(\Psi_{A\to B})
\end{bmatrix}
\succeq 0
\end{align*}
we obtain the SDP
\begin{align*}
\max:\,&\lambda\\
\text{subject to}:\,&
\begin{bmatrix}
J(\Phi_{A\to B})& X_{BA}\\
X_{BA}^\dagger& J(\Psi_{A\to B})
\end{bmatrix}\succeq 0\\
&\frac12(X_A+X_A^\dagger)\succeq \lambda I_A.
\end{align*}

\subsection{SDP for the trace norm}\label{apx:sdpt}
The trace norm of an Hermitian matrix $H_A$ is represented as an solution of the following SDP\@.
\begin{align*}
\max_{X_A}:\,& \langle H_A, X_A\rangle\\
\text{subject to}:\,&
-I_A\preceq X_A \preceq I_A.
\end{align*}
Hence, for an Hermitian-preserving map $\Lambda_{A\to B}$, $\max_{\ket{\psi}_{AR}} \|\Lambda_{A\to B}(\ketbra{\psi}{\psi}_{AR})\|_1$ is equal to
\begin{equation}
\begin{split}
\max_{\ket{\psi}_{AR},\, X_{BR}}:\,& \left\langle \Lambda_{A\to B}(\ketbra{\psi}{\psi}_{AR}), X_{BR}\right\rangle\\
\text{subject to}:\,&
-I_{BR}\preceq X_{BR} \preceq I_{BR}.
\end{split}
\label{eq:dia}
\end{equation}
From~\eqref{eq:cp},
\begin{equation*}
\Lambda_{A\to B}(\ketbra{\psi}{\psi}_{AR}) =(I_B\otimes L_{A\to R})J(\Lambda_{A\to B})(I_B\otimes L_{A\to R}^\dagger).
\end{equation*}
Let
\begin{equation*}
X'_{BA} := (I_B\otimes L_{A\to R}^\dagger) X_{BR}(I_B\otimes L_{A\to R}).
\end{equation*}
Then, we obtain
\begin{equation*}
\left\langle\Lambda_{A\to B}(\ketbra{\psi}{\psi}_{AR}), X_{BR}\right\rangle
=
\left\langle J(\Lambda_{A\to B}), X_{BA}'\right\rangle
\end{equation*}
and
\begin{equation*}
-I_{BR}\preceq X_{BR} \preceq I_{BR}\Longrightarrow
-I_{B}\otimes\sigma_A\preceq X'_{BA} \preceq I_{B}\otimes\sigma_A.
\end{equation*}
Hence, the optimal value of the following SDP is at least the optimal value of~\eqref{eq:dia}.
\begin{equation}
\begin{split}
\max:\,& \left\langle J(\Lambda_{A\to B}), X_{BA}'\right\rangle\\
\text{subject to}:\,&
-I_{B}\otimes\sigma_A\preceq X_{BA}'\preceq I_B\otimes \sigma_A\\
&\sigma_A\succeq0, \Tr(\sigma_A)=1.
\end{split}\label{eq:sdpdia}
\end{equation}
When $\dim(R)\ge\dim(A)$, we can show that the optimal value of~\eqref{eq:sdpdia} is equal to the optimal value of~\eqref{eq:dia} in the same way as the previous section.

The dual of~\eqref{eq:sdpdia} is
\begin{align*}
&\min_{\lambda,Y_{BA}\succeq0,Z_{BA}\succeq0} \,\max_{X'_{BA},\sigma_A\succeq0} \langle J(\Lambda_{A\to B}), X_{BA}'\rangle - \lambda(\Tr(\sigma_A)-1)\\
&\qquad + \langle Y_{BA}, I_B\otimes\sigma_A - X'_{BA}\rangle + \langle Z_{BA}, I_B\otimes\sigma_A+X_{BA}'\rangle\\
&=\min_{\lambda, Y_{BA}\succeq0,Z_{BA}\succeq0}\max_{X'_{BA},\sigma_A\succeq0}\\
&\qquad \lambda + \langle J(\Lambda_{A\to B}) -Y_{BA}+Z_{BA}, X_{BA}'\rangle\\
&\qquad + \langle Y_{A}+Z_{A}-\lambda I_A,\sigma_A\rangle.
\end{align*}
Hence, the dual SDP of~\eqref{eq:sdpdia} is
\begin{align*}
\min_{\lambda, Y_{BA},Z_{BA}}:\,&\lambda\\
\text{subject to}:\,&  J(\Lambda_{A\to B})= Y_{BA}-Z_{BA}\\
&  Y_A+Z_A\preceq \lambda I_A\\
& Y_{BA}\succeq 0,\,Z_{BA}\succeq0.
\end{align*}
Note that $\|\Tr_B(|J(\Lambda_{A\to B})|)\|_\infty$ is an upper bound of $\|\Lambda_{A\to B}\|_\diamond$, but not neccesarily equal to $\|\Lambda_{A\to B}\|_\infty$.

\bibliography{main}

\begin{thebibliography}{24}%
\makeatletter
\providecommand \@ifxundefined [1]{%
 \@ifx{#1\undefined}
}%
\providecommand \@ifnum [1]{%
 \ifnum #1\expandafter \@firstoftwo
 \else \expandafter \@secondoftwo
 \fi
}%
\providecommand \@ifx [1]{%
 \ifx #1\expandafter \@firstoftwo
 \else \expandafter \@secondoftwo
 \fi
}%
\providecommand \natexlab [1]{#1}%
\providecommand \enquote  [1]{``#1''}%
\providecommand \bibnamefont  [1]{#1}%
\providecommand \bibfnamefont [1]{#1}%
\providecommand \citenamefont [1]{#1}%
\providecommand \href@noop [0]{\@secondoftwo}%
\providecommand \href [0]{\begingroup \@sanitize@url \@href}%
\providecommand \@href[1]{\@@startlink{#1}\@@href}%
\providecommand \@@href[1]{\endgroup#1\@@endlink}%
\providecommand \@sanitize@url [0]{\catcode `\\12\catcode `\$12\catcode
  `\&12\catcode `\#12\catcode `\^12\catcode `\_12\catcode `\%12\relax}%
\providecommand \@@startlink[1]{}%
\providecommand \@@endlink[0]{}%
\providecommand \url  [0]{\begingroup\@sanitize@url \@url }%
\providecommand \@url [1]{\endgroup\@href {#1}{\urlprefix }}%
\providecommand \urlprefix  [0]{URL }%
\providecommand \Eprint [0]{\href }%
\providecommand \doibase [0]{https://doi.org/}%
\providecommand \selectlanguage [0]{\@gobble}%
\providecommand \bibinfo  [0]{\@secondoftwo}%
\providecommand \bibfield  [0]{\@secondoftwo}%
\providecommand \translation [1]{[#1]}%
\providecommand \BibitemOpen [0]{}%
\providecommand \bibitemStop [0]{}%
\providecommand \bibitemNoStop [0]{.\EOS\space}%
\providecommand \EOS [0]{\spacefactor3000\relax}%
\providecommand \BibitemShut  [1]{\csname bibitem#1\endcsname}%
\let\auto@bib@innerbib\@empty
\bibitem [{\citenamefont {Harrow}\ \emph {et~al.}(2010)\citenamefont {Harrow},
  \citenamefont {Hassidim}, \citenamefont {Leung},\ and\ \citenamefont
  {Watrous}}]{harrow2010adaptive}%
  \BibitemOpen
  \bibfield  {author} {\bibinfo {author} {\bibfnamefont {A.}~\bibnamefont
  {Harrow}}, \bibinfo {author} {\bibfnamefont {A.}~\bibnamefont {Hassidim}},
  \bibinfo {author} {\bibfnamefont {D.}~\bibnamefont {Leung}},\ and\ \bibinfo
  {author} {\bibfnamefont {J.}~\bibnamefont {Watrous}},\ }\href
  {https://doi.org/10.1103/PHYSREVA.81.032339} {\bibfield  {journal} {\bibinfo
  {journal} {Phys. Rev. A}\ }\textbf {\bibinfo {volume} {81}},\ \bibinfo
  {pages} {032339} (\bibinfo {year} {2010})}\BibitemShut {NoStop}%
\bibitem [{\citenamefont {Duan}\ \emph {et~al.}(2009)\citenamefont {Duan},
  \citenamefont {Feng},\ and\ \citenamefont {Ying}}]{duan2009perfect}%
  \BibitemOpen
  \bibfield  {author} {\bibinfo {author} {\bibfnamefont {R.}~\bibnamefont
  {Duan}}, \bibinfo {author} {\bibfnamefont {Y.}~\bibnamefont {Feng}},\ and\
  \bibinfo {author} {\bibfnamefont {M.}~\bibnamefont {Ying}},\ }\href@noop {}
  {\bibfield  {journal} {\bibinfo  {journal} {Phys. Rev. Lett.}\ }\textbf
  {\bibinfo {volume} {103}},\ \bibinfo {pages} {210501} (\bibinfo {year}
  {2009})}\BibitemShut {NoStop}%
\bibitem [{\citenamefont {Kawachi}\ \emph {et~al.}(2019)\citenamefont
  {Kawachi}, \citenamefont {Kawano}, \citenamefont {Le~Gall},\ and\
  \citenamefont {Tamaki}}]{kawachi2019quantum}%
  \BibitemOpen
  \bibfield  {author} {\bibinfo {author} {\bibfnamefont {A.}~\bibnamefont
  {Kawachi}}, \bibinfo {author} {\bibfnamefont {K.}~\bibnamefont {Kawano}},
  \bibinfo {author} {\bibfnamefont {F.}~\bibnamefont {Le~Gall}},\ and\ \bibinfo
  {author} {\bibfnamefont {S.}~\bibnamefont {Tamaki}},\ }\href@noop {}
  {\bibfield  {journal} {\bibinfo  {journal} {IEICE Trans. Inf. \& Syst.}\
  }\textbf {\bibinfo {volume} {102}},\ \bibinfo {pages} {483} (\bibinfo {year}
  {2019})}\BibitemShut {NoStop}%
\bibitem [{\citenamefont {Pirandola}\ \emph {et~al.}(2019)\citenamefont
  {Pirandola}, \citenamefont {Laurenza}, \citenamefont {Lupo},\ and\
  \citenamefont {Pereira}}]{pirandola2019fundamental}%
  \BibitemOpen
  \bibfield  {author} {\bibinfo {author} {\bibfnamefont {S.}~\bibnamefont
  {Pirandola}}, \bibinfo {author} {\bibfnamefont {R.}~\bibnamefont {Laurenza}},
  \bibinfo {author} {\bibfnamefont {C.}~\bibnamefont {Lupo}},\ and\ \bibinfo
  {author} {\bibfnamefont {J.~L.}\ \bibnamefont {Pereira}},\ }\href@noop {}
  {\bibfield  {journal} {\bibinfo  {journal} {NPJ Quantum Inf.}\ }\textbf
  {\bibinfo {volume} {5}},\ \bibinfo {pages} {1} (\bibinfo {year}
  {2019})}\BibitemShut {NoStop}%
\bibitem [{\citenamefont {Zhuang}\ and\ \citenamefont
  {Pirandola}(2020)}]{zhuang2020ultimate}%
  \BibitemOpen
  \bibfield  {author} {\bibinfo {author} {\bibfnamefont {Q.}~\bibnamefont
  {Zhuang}}\ and\ \bibinfo {author} {\bibfnamefont {S.}~\bibnamefont
  {Pirandola}},\ }\href@noop {} {\bibfield  {journal} {\bibinfo  {journal}
  {Phys. Rev. Lett.}\ }\textbf {\bibinfo {volume} {125}},\ \bibinfo {pages}
  {080505} (\bibinfo {year} {2020})}\BibitemShut {NoStop}%
\bibitem [{\citenamefont {Katariya}\ and\ \citenamefont
  {Wilde}(2021{\natexlab{a}})}]{PhysRevA.104.052406}%
  \BibitemOpen
  \bibfield  {author} {\bibinfo {author} {\bibfnamefont {V.}~\bibnamefont
  {Katariya}}\ and\ \bibinfo {author} {\bibfnamefont {M.~M.}\ \bibnamefont
  {Wilde}},\ }\href {https://doi.org/10.1103/PhysRevA.104.052406} {\bibfield
  {journal} {\bibinfo  {journal} {Phys. Rev. A}\ }\textbf {\bibinfo {volume}
  {104}},\ \bibinfo {pages} {052406} (\bibinfo {year}
  {2021}{\natexlab{a}})}\BibitemShut {NoStop}%
\bibitem [{\citenamefont {Helstrom}(1967)}]{helstrom1967detection}%
  \BibitemOpen
  \bibfield  {author} {\bibinfo {author} {\bibfnamefont {C.~W.}\ \bibnamefont
  {Helstrom}},\ }\href
  {https://doi.org/https://doi.org/10.1016/S0019-9958(67)90302-6} {\bibfield
  {journal} {\bibinfo  {journal} {Inf. Control.}\ }\textbf {\bibinfo {volume}
  {10}},\ \bibinfo {pages} {254} (\bibinfo {year} {1967})}\BibitemShut
  {NoStop}%
\bibitem [{\citenamefont {Holevo}(1972)}]{holevo1972analogue}%
  \BibitemOpen
  \bibfield  {author} {\bibinfo {author} {\bibfnamefont {A.~S.}\ \bibnamefont
  {Holevo}},\ }\href@noop {} {\bibfield  {journal} {\bibinfo  {journal} {Tr.
  Mosk. Mat. Obs.}\ }\textbf {\bibinfo {volume} {26}},\ \bibinfo {pages} {133}
  (\bibinfo {year} {1972})}\BibitemShut {NoStop}%
\bibitem [{\citenamefont {Watrous}(2013)}]{watrous2013simpler}%
  \BibitemOpen
  \bibfield  {author} {\bibinfo {author} {\bibfnamefont {J.}~\bibnamefont
  {Watrous}},\ }\href@noop {} {\bibfield  {journal} {\bibinfo  {journal} {Chic.
  J. Theoret. Comput. Sci.}\ }\textbf {\bibinfo {volume} {2013}} (\bibinfo
  {year} {2013})}\BibitemShut {NoStop}%
\bibitem [{\citenamefont {Bennett}\ \emph {et~al.}(1997)\citenamefont
  {Bennett}, \citenamefont {Bernstein}, \citenamefont {Brassard},\ and\
  \citenamefont {Vazirani}}]{bennett1997strengths}%
  \BibitemOpen
  \bibfield  {author} {\bibinfo {author} {\bibfnamefont {C.~H.}\ \bibnamefont
  {Bennett}}, \bibinfo {author} {\bibfnamefont {E.}~\bibnamefont {Bernstein}},
  \bibinfo {author} {\bibfnamefont {G.}~\bibnamefont {Brassard}},\ and\
  \bibinfo {author} {\bibfnamefont {U.}~\bibnamefont {Vazirani}},\ }\href@noop
  {} {\bibfield  {journal} {\bibinfo  {journal} {SIAM J. Comput.}\ }\textbf
  {\bibinfo {volume} {26}},\ \bibinfo {pages} {1510} (\bibinfo {year}
  {1997})}\BibitemShut {NoStop}%
\bibitem [{\citenamefont {Zalka}(1999)}]{zalka1999grover}%
  \BibitemOpen
  \bibfield  {author} {\bibinfo {author} {\bibfnamefont {C.}~\bibnamefont
  {Zalka}},\ }\href@noop {} {\bibfield  {journal} {\bibinfo  {journal} {Phys.
  Rev. A}\ }\textbf {\bibinfo {volume} {60}},\ \bibinfo {pages} {2746}
  (\bibinfo {year} {1999})}\BibitemShut {NoStop}%
\bibitem [{\citenamefont {Nakahira}\ and\ \citenamefont
  {Kato}(2021)}]{PhysRevLett.126.200502}%
  \BibitemOpen
  \bibfield  {author} {\bibinfo {author} {\bibfnamefont {K.}~\bibnamefont
  {Nakahira}}\ and\ \bibinfo {author} {\bibfnamefont {K.}~\bibnamefont
  {Kato}},\ }\href {https://doi.org/10.1103/PhysRevLett.126.200502} {\bibfield
  {journal} {\bibinfo  {journal} {Phys. Rev. Lett.}\ }\textbf {\bibinfo
  {volume} {126}},\ \bibinfo {pages} {200502} (\bibinfo {year}
  {2021})}\BibitemShut {NoStop}%
\bibitem [{\citenamefont {Boyer}\ \emph {et~al.}(1998)\citenamefont {Boyer},
  \citenamefont {Brassard}, \citenamefont {H{\o}yer},\ and\ \citenamefont
  {Tapp}}]{boyer1998tight}%
  \BibitemOpen
  \bibfield  {author} {\bibinfo {author} {\bibfnamefont {M.}~\bibnamefont
  {Boyer}}, \bibinfo {author} {\bibfnamefont {G.}~\bibnamefont {Brassard}},
  \bibinfo {author} {\bibfnamefont {P.}~\bibnamefont {H{\o}yer}},\ and\
  \bibinfo {author} {\bibfnamefont {A.}~\bibnamefont {Tapp}},\ }\href@noop {}
  {\bibfield  {journal} {\bibinfo  {journal} {Fortschr. Phys.}\ }\textbf
  {\bibinfo {volume} {46}},\ \bibinfo {pages} {493} (\bibinfo {year}
  {1998})}\BibitemShut {NoStop}%
\bibitem [{\citenamefont {Pereira}\ and\ \citenamefont
  {Pirandola}(2021)}]{pereira2021bounds}%
  \BibitemOpen
  \bibfield  {author} {\bibinfo {author} {\bibfnamefont {J.~L.}\ \bibnamefont
  {Pereira}}\ and\ \bibinfo {author} {\bibfnamefont {S.}~\bibnamefont
  {Pirandola}},\ }\href@noop {} {\bibfield  {journal} {\bibinfo  {journal}
  {Phys. Rev. A}\ }\textbf {\bibinfo {volume} {103}},\ \bibinfo {pages}
  {022610} (\bibinfo {year} {2021})}\BibitemShut {NoStop}%
\bibitem [{\citenamefont {Nielsen}\ and\ \citenamefont
  {Chuang}(2010)}]{nielsen_chuang_2010}%
  \BibitemOpen
  \bibfield  {author} {\bibinfo {author} {\bibfnamefont {M.~A.}\ \bibnamefont
  {Nielsen}}\ and\ \bibinfo {author} {\bibfnamefont {I.~L.}\ \bibnamefont
  {Chuang}},\ }\href {https://doi.org/10.1017/CBO9780511976667} {\emph
  {\bibinfo {title} {Quantum Computation and Quantum Information: 10th
  Anniversary Edition}}}\ (\bibinfo  {publisher} {Cambridge University Press},\
  \bibinfo {year} {2010})\BibitemShut {NoStop}%
\bibitem [{\citenamefont {Beals}\ \emph {et~al.}(2001)\citenamefont {Beals},
  \citenamefont {Buhrman}, \citenamefont {Cleve}, \citenamefont {Mosca},\ and\
  \citenamefont {de~Wolf}}]{beals2001quantum}%
  \BibitemOpen
  \bibfield  {author} {\bibinfo {author} {\bibfnamefont {R.}~\bibnamefont
  {Beals}}, \bibinfo {author} {\bibfnamefont {H.}~\bibnamefont {Buhrman}},
  \bibinfo {author} {\bibfnamefont {R.}~\bibnamefont {Cleve}}, \bibinfo
  {author} {\bibfnamefont {M.}~\bibnamefont {Mosca}},\ and\ \bibinfo {author}
  {\bibfnamefont {R.}~\bibnamefont {de~Wolf}},\ }\href@noop {} {\bibfield
  {journal} {\bibinfo  {journal} {Journal of the ACM}\ }\textbf {\bibinfo
  {volume} {48}},\ \bibinfo {pages} {778} (\bibinfo {year} {2001})}\BibitemShut
  {NoStop}%
\bibitem [{\citenamefont {Ambainis}\ \emph {et~al.}(2016)\citenamefont
  {Ambainis}, \citenamefont {Iwama}, \citenamefont {Nakanishi}, \citenamefont
  {Nishimura}, \citenamefont {Raymond}, \citenamefont {Tani},\ and\
  \citenamefont {Yamashita}}]{ambainis2016quantum}%
  \BibitemOpen
  \bibfield  {author} {\bibinfo {author} {\bibfnamefont {A.}~\bibnamefont
  {Ambainis}}, \bibinfo {author} {\bibfnamefont {K.}~\bibnamefont {Iwama}},
  \bibinfo {author} {\bibfnamefont {M.}~\bibnamefont {Nakanishi}}, \bibinfo
  {author} {\bibfnamefont {H.}~\bibnamefont {Nishimura}}, \bibinfo {author}
  {\bibfnamefont {R.}~\bibnamefont {Raymond}}, \bibinfo {author} {\bibfnamefont
  {S.}~\bibnamefont {Tani}},\ and\ \bibinfo {author} {\bibfnamefont
  {S.}~\bibnamefont {Yamashita}},\ }\href@noop {} {\bibfield  {journal}
  {\bibinfo  {journal} {computational complexity}\ }\textbf {\bibinfo {volume}
  {25}},\ \bibinfo {pages} {723} (\bibinfo {year} {2016})}\BibitemShut
  {NoStop}%
\bibitem [{\citenamefont {Watrous}(2018)}]{watrous2018theory}%
  \BibitemOpen
  \bibfield  {author} {\bibinfo {author} {\bibfnamefont {J.}~\bibnamefont
  {Watrous}},\ }\href {https://doi.org/10.1017/9781316848142} {\emph {\bibinfo
  {title} {The Theory of Quantum Information}}}\ (\bibinfo  {publisher}
  {Cambridge University Press},\ \bibinfo {year} {2018})\BibitemShut {NoStop}%
\bibitem [{\citenamefont {Gilchrist}\ \emph {et~al.}(2005)\citenamefont
  {Gilchrist}, \citenamefont {Langford},\ and\ \citenamefont
  {Nielsen}}]{PhysRevA.71.062310}%
  \BibitemOpen
  \bibfield  {author} {\bibinfo {author} {\bibfnamefont {A.}~\bibnamefont
  {Gilchrist}}, \bibinfo {author} {\bibfnamefont {N.~K.}\ \bibnamefont
  {Langford}},\ and\ \bibinfo {author} {\bibfnamefont {M.~A.}\ \bibnamefont
  {Nielsen}},\ }\href {https://doi.org/10.1103/PhysRevA.71.062310} {\bibfield
  {journal} {\bibinfo  {journal} {Phys. Rev. A}\ }\textbf {\bibinfo {volume}
  {71}},\ \bibinfo {pages} {062310} (\bibinfo {year} {2005})}\BibitemShut
  {NoStop}%
\bibitem [{\citenamefont {Grover}(1996)}]{grover1996fast}%
  \BibitemOpen
  \bibfield  {author} {\bibinfo {author} {\bibfnamefont {L.~K.}\ \bibnamefont
  {Grover}},\ }in\ \href@noop {} {\emph {\bibinfo {booktitle} {Proceedings of
  the Twenty-Eighth Annual ACM Symposium on Theory of Computing}}},\ \bibinfo
  {editor} {edited by\ \bibinfo {editor} {\bibfnamefont {G.~L.}\ \bibnamefont
  {Miller}}}\ (\bibinfo  {publisher} {Association for Computing Machinery, New
  York},\ \bibinfo {year} {1996})\ p.\ \bibinfo {pages} {212}\BibitemShut
  {NoStop}%
\bibitem [{\citenamefont {Brassard}\ \emph {et~al.}(2002)\citenamefont
  {Brassard}, \citenamefont {H{\o}yer}, \citenamefont {Mosca},\ and\
  \citenamefont {Tapp}}]{brassard2002quantum}%
  \BibitemOpen
  \bibfield  {author} {\bibinfo {author} {\bibfnamefont {G.}~\bibnamefont
  {Brassard}}, \bibinfo {author} {\bibfnamefont {P.}~\bibnamefont {H{\o}yer}},
  \bibinfo {author} {\bibfnamefont {M.}~\bibnamefont {Mosca}},\ and\ \bibinfo
  {author} {\bibfnamefont {A.}~\bibnamefont {Tapp}},\ }\href@noop {} {\bibfield
   {journal} {\bibinfo  {journal} {Contemp. Math.}\ }\textbf {\bibinfo {volume}
  {305}},\ \bibinfo {pages} {53} (\bibinfo {year} {2002})}\BibitemShut
  {NoStop}%
\bibitem [{\citenamefont {Fuchs}\ and\ \citenamefont {van~de
  Graaf}(1999)}]{fuchs1999cryptographic}%
  \BibitemOpen
  \bibfield  {author} {\bibinfo {author} {\bibfnamefont {C.~A.}\ \bibnamefont
  {Fuchs}}\ and\ \bibinfo {author} {\bibfnamefont {J.}~\bibnamefont {van~de
  Graaf}},\ }\href@noop {} {\bibfield  {journal} {\bibinfo  {journal} {IEEE
  Trans. Inf. Theory}\ }\textbf {\bibinfo {volume} {45}},\ \bibinfo {pages}
  {1216} (\bibinfo {year} {1999})}\BibitemShut {NoStop}%
\bibitem [{\citenamefont {Katariya}\ and\ \citenamefont
  {Wilde}(2021{\natexlab{b}})}]{katariya2021geometric}%
  \BibitemOpen
  \bibfield  {author} {\bibinfo {author} {\bibfnamefont {V.}~\bibnamefont
  {Katariya}}\ and\ \bibinfo {author} {\bibfnamefont {M.~M.}\ \bibnamefont
  {Wilde}},\ }\href@noop {} {\bibfield  {journal} {\bibinfo  {journal} {Quantum
  Inf. Process.}\ }\textbf {\bibinfo {volume} {20}},\ \bibinfo {pages} {1}
  (\bibinfo {year} {2021}{\natexlab{b}})}\BibitemShut {NoStop}%
\bibitem [{\citenamefont {Paturi}(1992)}]{paturi1992degree}%
  \BibitemOpen
  \bibfield  {author} {\bibinfo {author} {\bibfnamefont {R.}~\bibnamefont
  {Paturi}},\ }in\ \href@noop {} {\emph {\bibinfo {booktitle} {Proceedings of
  the Twenty-Fourth Annual ACM Symposium on Theory of Computing}}}\ (\bibinfo
  {year} {1992})\ pp.\ \bibinfo {pages} {468--474}\BibitemShut {NoStop}%
\end{thebibliography}%

\end{document}